\newcommand{\setappendix}{Appendix~\thesection:~~}
\newcommand{\setsection}{\thesection~~}
\titleformat{\section}{\bfseries\LARGE}{%
	\ifnum\pdfstrcmp{\@currenvir}{appendices}=0
	\setappendix
	\else
	\setsection
\fi}{0em}{}
\def \({\left(}
\def \){\right)}
\def \[{\left[}
\def \]{\right]}
\newcommand{\nn}{\nonumber \\}
\newcommand{\sign}{\text{ sign}}
\newcommand{\be}{\begin{equation}}
\newcommand{\ee}{\end{equation}}
\newcommand{\beqa}{\begin{eqnarray}}
\newcommand{\eeqa}{\end{eqnarray}}
\newcommand{\bea}{\begin{align}}
\newcommand{\eea}{\end{align}}
\newtheorem{theorem}{Theorem}
\newtheorem{lemma}[theorem]{\textbf{Lemma}}
\newtheorem{proposition}[theorem]{\textbf{Proposition}}
\newtheorem{hypothesis}[theorem]{\textbf{Hypothesis}}
\DeclareMathAlphabet{\varmathbb}{U}{bbold}{m}{n}
\newcommand{\EE}{\mathbb{E}}
\numberwithin{equation}{section}
\DeclareTextCommandDefault{\nobreakspace}{\leavevmode\nobreak\ } 
\begin{document}
\title{Overlap matrix concentration in optimal Bayesian inference}

\author{Jean Barbier}
\date{}
\maketitle
{\let\thefootnote\relax\footnote{
\!\!\!\!\!\!\!\!\!\!\!The Abdus Salam International Center for Theoretical Physics, Trieste, Italy.\\
{\it jbarbier@ictp.it}
}}
\setcounter{footnote}{0}

\begin{abstract}
We consider models of Bayesian inference of signals with vectorial components of finite dimensionality. We show that under a proper perturbation these models are replica symmetric in the sense that the overlap matrix concentrates. The overlap matrix is the order parameter in these models and is directly related to error metrics such as minimum mean-square errors. Our proof is valid in the optimal Bayesian inference setting. This means that it relies on the assumption that the model and all its hyper-parameters are known so that the posterior distribution can be written exactly. Examples of important problems in high-dimensional inference and learning to which our results apply are low-rank tensor factorization, the committee machine neural network with a finite number of hidden neurons in the teacher-student scenario, or multi-layer versions of the generalized linear model.
\end{abstract}

\section{Introduction}
This decade is witnessing a burst of mathematical studies related to high-dimensional inference and learning problems. One reason is that an important arsenal of methods, developed in particular by the physicists and mathematicians working on the rigorous aspects of spin glasses, has found a new rich playground where it can be applied with success \cite{GhirlandaGuerra:1998,guerra2002thermodynamic,guerra2003broken,talagrand2010meanfield1,talagrand2010meanfield2,talagrand2006parisi,panchenko2013sherrington}. Models in learning like the perceptron and Hopfield neural networks have been analyzed in depth since the eighties by the physics community \cite{hopfield1982neural,gardner1988optimal,gardner1989three,mezard1989space,gyorgyi1990first,seung1992statistical}, or in inference, e.g., in the context of communications and error correcting codes \cite{sourlas1989spin,Tanaka_CDMA}, using powerful but non-rigorous techniques such as the replica and cavity methods \cite{mezard1987spin,mezard2009information}. But due to the difficulty and richness of these models rigorous results experienced some delay with respect to (w.r.t.) the physics appoaches and were restricted to very specific models such as the famous Sherrington-Kirkpatrick model \cite{guerra2002thermodynamic,talagrand2006parisi,panchenko2013sherrington}. The trend is changing and it is fair to say that the gap between heuristic (yet often exact) physics approaches and rigorous ones is quickly shrinking. In particular important progress towards the vindication of the replica and cavity methods has been made recently in the context of high-dimensional Bayesian inference and learning. Examples of problems in this class where the physics approaches are now rigorously settled include low-rank matrix and tensor factorization \cite{korada2009exact,deshpandePCA,deshpande2015asymptotic,krzakala2016mutual,XXT,barbier2018rank,MiolaneXX,MiolaneUV,MiolaneTensor,2017arXiv170910368B,BarbierM17a,2018arXiv180101593E,mourrat2018hamilton,Luneau_matFacto,mourrat2019hamilton}, random linear and generalized estimation \cite{KoradaMacris_CDMA,barbier_allerton_RLE,barbier_ieee_replicaCS,reeves2016replica,barbier2017phase,RLEStructuredMatrices}, models of neural networks in the teacher-student scenario \cite{barbier2017phase,SM_Arxiv_Aubin2018,Gabrie:NIPS2018}, or sparse graphical models such as error-correcting codes and block models \cite{coja2017information,abbe2018community,BarCM:2018}. 

All these results are based in some way or another on the control of the fluctuations of the order parameter of the problem, the overlap, which quantifies the quality of inference. Optimal Bayesian inference --optimal meaning that the true posterior is known-- is an ubiquitous setting in the sense that the overlap can be shown to concentrate, and this in the whole regime of parameters (amplitude of the noise, number of observations/data points divided by the number of parameters to infer etc). 
When the overlap is self-averaging (which is the case in optimal Bayesian inference under a proper perturbation, see Theorems~\ref{prop:thermalBOund} and \ref{thm:Q_con}) then one expects {\it replica symmetric} variational formulas for the asymptotic free energy or mutual information density, as understood a long time ago by physicists \cite{Pastur-Shcherbina-1991,Pastur-Shcherbina-Tirozzi-1994}. Actually in the physics literature replica symmetry is generally the term used to precisely mean that the order parameter concentrates. This is in contrast with models where the overlap is not self-averaging, like in spin glasses at low temperature or combinatorial optimization problems, which leads to more complicated formulas for the free energy computed using Parisi's {\it replica symmetry breaking} scheme \cite{Parisi-1980,mezard1987spin,mezard2009information,talagrand2010meanfield2,talagrand2006parisi,panchenko2013sherrington}. 

In most of the studied statistical models the overlap order parameter is a scalar. In the context of optimal Bayesian inference it is now quite standard to show that when the overlap is a scalar it is self-averaging in the whole phase diagram, see, e.g., \cite{2019arXiv190106516B,BarbierM17a}. The techniques to do so have been developed in the context of communications starting with \cite{macris2007griffith,MacrisKudekar2009,KoradaMacris_CDMA} (and then generalized in \cite{andrea2008estimating,coja2017information}), and are extensions of methods used in the analysis of spin glasses \cite{GhirlandaGuerra:1998,aizenman1998stability,contucci2005spin,contucci2007ghirlanda,contucci2013perspectives,talagrand2010meanfield1,talagrand2010meanfield2,panchenko2010ghirlanda}. In this paper we consider instead Bayesian inference problems where the signal to be reconstructed is made of vectorial components. In this case the overlap is a matrix and the associated replica formulas are variational formulas over matrices. The concentration techniques developed for scalar overlaps do not apply directly, and need to be extended using new non-trivial ideas. In particular, new difficulties will appear w.r.t. the scalar case due to the fact that overlap matrices are not symmetric objects. Examples of problems where matrix overlaps appear are the factorization of matrices and tensors of rank greater than one \cite{MiolaneXX}, or the so-called committee machine neural network with few hidden neurons \cite{schwarze1992generalization,monasson1995weight,engel2001statistical,SM_Arxiv_Aubin2018}. In the context of spin glasses, matrix overlap order parameters have also appeared recently in studies of vectorial versions of the Potts and mixed $p$-spin models by Panchenko \cite{panchenko2018potts,panchenko2018free}; in these models replica symmetry breaking occurs and the overlap does not concentrate. Let us also mention the recent work by Agliari and co-workers \cite{agliari2018non} on a ``multi-species'' version of the Hopfield model, where a matrix order parameter also appears. There concentration of overlap, in the replica symmetric region where concentration is expected, is assumed based on strong physical arguments. In the context of optimal Bayesian inference the situation is more favorable than in spin glasses: thanks to special identities that follow from Bayes' rule and known as ``Nishimori identities'' in statistical physics (see, e.g., \cite{nishimori2001statistical,contucci2009spin}), we show in this paper how to control the overlap fluctuations in the whole phase diagram\footnote{Let us mention another particular setting where the (scalar) overlap as well as its multi-body generalizations (which appear in diluted problems, i.e., problems defined by sparse graphical models) can be controlled under proper perturbations for all the parameters values in the phase diagram: ferromagnetic models \cite{2019arXiv190106521B}.}.

Section~\ref{sec2} presents the general setting, gives a few examples of models covered by our results, and explains the important Nishimori identity for optimal Bayesian inference problems. In section~\ref{sec:Gauss_channel} we introduce the perturbation needed in order to prove overlap concentration, and then give our main results Theorems~\ref{prop:thermalBOund} and \ref{thm:Q_con}. Then in section~\ref{sec4} we provide the proof of Theorem~\ref{thm:Q_con}. Finally in section~\ref{app:Lconc} we prove an important intermediate concentration result for another matrix, that will be key in controlling the overlap.
\section{Optimal Bayesian inference of signals with vector entries}\label{sec2}
\subsection{Setting}\label{sec:BayesOptInf}
Consider a model where a signal $X=(X_{ik})\in [-S,S]^{n\times K}$ made of $n$ components (indexed by $i$), that are each a $K$-dimensional bounded real vector (with dimensions indexed by $k$), is generated probabilistically. Its probability distribution $P_0$, called prior, may depend on a generic hyper-parameter $\theta_0\in \Theta_0$ with $\Theta_0$ an arbitrary real set, i.e., $$X\sim P_0(\,\cdot\,|\theta_0)\,.$$ We assume that the prior has bounded support (with $S<+\infty$ arbitrarily large but independent of $n$). Then some real data (also called observations) $\widetilde Y$ are generated conditionally on the unknown signal $X$ and an hyper-parameter $\theta_{\rm out}$ belonging to a generic real set $\Theta_{\rm out}$. Namely, the data $$\widetilde Y \sim P_{\rm out}(\,\cdot\,| X,\theta_{\rm out})\,,$$ with $\widetilde Y\in \widetilde{\cal Y}$ a generic real set: the data $\widetilde Y$ and hyper-parameters $\theta_0$, $\theta_{\rm out}$ can be real numbers, vectors, tensors etc. The conditional distribution $P_{\rm out}$ is called likelihood, or ``output channel''. We also assume that the hyper-parameters $\theta_0$ and $\theta_{\rm out}$ are also probabilistic, with respective probability distributions $P_{\theta_0}$ supported on $\Theta_0$, and $P_{\theta_{\rm out}}$ supported on $\Theta_{\rm out}$. This formulation includes the case of deterministic hyper-parameters choosing Dirac delta measures $P_{\theta_{\rm out}}=\delta_{\theta_{\rm out}}$ and $P_{\theta_0}=\delta_{\theta_0}$.

The inference task is to recover the signal $X$ as accurately as possible given the data $\widetilde Y$. We moreover assume that the hyper-parameters $\theta\equiv(\theta_0, \theta_{\rm out})$, the likelihood $P_{\rm out}$ and the prior $P_0$ are known to the statistician, and call this setting {\it optimal} Bayesian inference.

The information-theoretical optimal way of reconstructing the signal follows from its posterior distribution. Using Bayes' formula the posterior reads
\begin{align}
P(X=x|\widetilde Y,\theta)=P(x|\widetilde Y,\theta)&=\frac{P_0(x|\theta_0) P_{\rm out}(\widetilde Y|x,\theta_{\rm out})}{\int dP_0(x'|\theta_0) P_{\rm out}(\widetilde Y|x',\theta_{\rm out})} \nn
&= \frac{1}{{\cal Z}_{0,n}(\widetilde Y,\theta)} P_0(x|\theta_0)\exp\{-{\cal H}_0(x,\widetilde Y,\theta_{\rm out})\}\,. \label{post}
\end{align}
Employing the language of statistical mechanics we call 
\begin{align*}
{\cal H}_0(x,\widetilde Y,\theta_{\rm out})\equiv -\ln P_{\rm out}(\widetilde Y|x,\theta_{\rm out})
\end{align*}
the base {\it Hamiltonian}, while the posterior normalization ${\cal Z}_{0,n}(\widetilde Y,\theta)$ is the {\it partition function} of the base inference model. Finally the averaged {\it free energy} is minus the averaged log-partition function:
\begin{align*}
f_{0,n}\equiv 	-\frac1n \EE\ln{\cal Z}_{0,n}(\widetilde Y,\theta) =-\frac1n \EE\ln  \int dP_0(x|\theta_0)\exp\{-{\cal H}_0(x,\widetilde Y,\theta_{\rm out})\}\,.
\end{align*}
The average $\EE=\EE_{\theta}\EE_{X|\theta_0}\EE_{\widetilde Y|X,\theta_{\rm out}}$ is over the randomness of $(\theta,X,\widetilde Y)$. These are jointly called the {\it quenched variables} as they are fixed by the realization of the problem, in contrast with the dynamical variable $x$ which fluctuates according to the posterior. In general $\EE$ will be used for an average w.r.t. all random variables in the ensuing expression. Note that the averaged free energy is nothing else than the Shannon entropy density of the observations (given the hyper-parameters): $f_{0,n}=\frac1nH(\widetilde Y|\theta)$. Therefore it is simply related to the mutual information density between the observations and the signal:
\begin{align*}
	\frac1n I(X;\widetilde Y|\theta)= f_{0,n}-\frac1n H(\widetilde Y|X,\theta)\,.
\end{align*}
The conditional entropy $\frac1n H(\widetilde Y|X,\theta)$ is often easy to compute, as opposed to the averaged free energy. 

We call model \eqref{post} the ``base model'' in contrast with the perturbed model presented in section~\ref{sec:Gauss_channel}, a slightly modified version of the base model where additional side-information is given, and for which overlap concentration can be proved without altering the thermodynamic $n\to +\infty$ limit of the averaged free energy (if it exists), see Lemma~\ref{lemma:same_f}.

The central object of interest is the $K\times K$ {\it overlap matrix} (or simply overlap) $Q=(Q_{kk'})$ defined as
\begin{align*}
Q \equiv \frac{1}{n} X^\intercal x=\frac{1}{n} \sum_{i=1}^nX_ix_i^\intercal\,, \qquad \text{or componentwise} \qquad Q_{kk'}\equiv \frac{1}{n}	\sum_{i=1}^nX_{ik}x_{ik'}\,.
\end{align*}
Here $x$ is a sample drawn according to the posterior distribution and $X$ is the signal (all vectors are columns, including isolated rows of matrices, and transposed vectors are rows). The overlap contains a lot of information. E.g., the minimum mean-square error (MMSE), an error metric often considered in signal processing, is related to it through
\begin{align}
{\rm MMSE}\equiv\min_{\widehat x}\frac1n\EE\big[\|X-\widehat x(\widetilde Y,\theta)\|^2_{\rm F}\big]	 =\frac1n\EE\big[\|X- \langle x \rangle_0\|^2_{\rm F}\big] = \EE\big[\|X_1\|^2\big] - {\rm Tr}\,\EE\langle Q\rangle_0 \label{scalMMSE}
\end{align}
where we denote $\langle - \rangle_0$ the expectation w.r.t. the posterior \eqref{post} of the base model. The minimization is over all functions of $(\widetilde Y,\theta)$ in $\mathbb{R}^{n\times K}$, $\| -\|_{\rm F}$ is the Frobenius norm, $X_i=(X_{ik})_k\in [-S,S]^K$ is the $i$-th row of $X$. A simple fact from Bayesian inference is that the estimator minimizing the MMSE is the posterior mean $\langle x \rangle_0\equiv \EE[X|\widetilde Y,\theta]$. One may also be interested in the $K\times K$ MMSE matrix, which provides information about the individual dimensions in the row space of $X$:
\begin{align*}
\frac1n\EE\big[(X- \langle x \rangle_0)^\intercal(X- \langle x \rangle_0) \big] = \EE\big[X_1^\intercal X_1\big] - \EE\langle Q\rangle_0\,.
\end{align*}
This can be important in settings where some dimensions can be recovered while others cannot (see \cite{reeves2018mutual} and references therein) or, e.g., to study the ``specialization phase transition'' of the neurons during learning in some models of neural networks \cite{SM_Arxiv_Aubin2018}. The usual scalar MMSE \eqref{scalMMSE} is just the trace of this richer object.

Another metric of interest in problems where, e.g., the sign of the signal is lost due to symmetries is the matrix-MMSE (not to be confused with the MMSE matrix above). Again, it is related to the overlap (the notation $A=B+{\cal O}_S(1/n)$ means $|A-B|\le C(S)/n$ for some positive constant $C(S)$ depending only on the prior support $S$):
\begin{align}\label{m-MMSE}
{\rm mMMSE}\equiv\frac{1}{n^2}\sum_{i,j=1}^n\EE\big[(X_i^\intercal X_j- \langle x_i^\intercal x_j\rangle_0)^2\big] = \EE\big[(X_1^\intercal X_2)^2\big] - \EE\big\langle \|Q\|^2_{\rm F}\big\rangle_0 + {\cal O}_S(1/n)\,.
\end{align}
Finally if one is interested in estimating the sum over a subset ${\cal S}\subseteq \{1,\ldots,K\}$ of the the signal entries a possible error metric is
\begin{align*}
\frac{1}{n}\sum_{i=1}^n\EE\Big[\Big(\sum_{k\in{\cal S}}X_{ik}- \Big\langle \sum_{k\in{\cal S}} x_{ik}\Big\rangle_0\Big)^2\Big] = \sum_{(k,k')\in{\cal S}^2}\big(\EE[X_{1k}X_{1k'}] - \EE\langle Q_{kk'}\rangle_0\big)\,.	
\end{align*}
\subsection{Examples}
Let us provide some examples of models that fall under the setting of optimal Bayesian inference with vector variables as described in the previous section. 

In the symmetric order-$p$ rank-$K$ tensor factorization problem, the data-tensor $\widetilde Y=(\widetilde Y_{i_1\ldots i_p})$ is generated through the observation model 
\begin{align}
\widetilde Y_{i_1\ldots i_p}=n^{\frac{1-p}{2}}\,\sum_{k=1}^K X_{i_1 k}X_{i_2 k}\ldots X_{i_p k} + \widetilde Z_{i_1\ldots i_p}\,, \qquad
1\le i_1\le i_2\le \ldots \le i_p\le n\,.\label{ex:tensorFacto}
\end{align}
Here $\widetilde Z$ is a Gaussian noise tensor with independent and identically distributed (i.i.d.) ${\cal N}(0,1)$ entries for $1\le i_1\le i_2\le \ldots \le i_p\le n$, and the signal components are i.i.d., i.e., with a prior of the form $P_0=p_0^{\otimes n}$ with $p_0$ a probability distribution supported on $[-S,S]^K$. The case $p=2$ is known as the Wigner spike model, or low-rank matrix factorization, and is one of the simplest probabilistic model for principal component analysis. In both the analysis of \cite{Luneau_matFacto,mourrat2019hamilton} the matrix overlap concentration is a key result. The Wigner spike model is an example of model where the signal's sign is lost, and therefore a relevant error metric is the matrix-MMSE \eqref{m-MMSE}.

Another model is the following generalized linear model (GLM) (recall $X_i\in [-S,S]^K$):
\begin{align}
\widetilde Y_\mu \sim p_{\rm out}\Big(\,\cdot\, \Big| \sum_{i=1}^n \theta_{\mu i}X_{i}\Big), \qquad 1\le\mu\le m\,.\label{ex:GLM}
\end{align}
Note that here the $m$ observations are i.i.d. given $\mathbb{R}^{m\times n}\ni\theta_{\rm out}=(\theta_\mu)_{\mu=1}^m$ and $X$; this is the reason for the notation $p_{\rm out}$ instead of $P_{\rm out}$, the latter representing the full likelihood while the former is the conditional distribution of a single data point, i.e., $P_{\rm out}(\,\cdot\,|\theta_{\rm out} X)=\otimes_{\mu=1}^m p_{\rm out}(\,\cdot\,| X^\intercal \theta_\mu)$. We also assume that the prior $P_0=p_0^{\otimes n}$ is decoupled over the $n$ signal components and $m =\Theta(n)$. A particular simple deterministic case is 
\begin{align}
\widetilde Y_\mu ={\sign}\sum_{k=1}^K {\sign}\sum_{i=1}^n \theta_{\mu i}X_{ik}\,, \qquad 1\le\mu\le m\,. \label{committee}
\end{align}
This model is the committee machine mentionned in the introduction \cite{barbier2017phase,SM_Arxiv_Aubin2018}. Here $(X_{ik})_{i=1}^n$ can be interpreted as the weights of the $k$-th hidden neuron, and $(\theta_\mu)$ are $n$-dimensional data points used to generate the labels $(\widetilde Y_\mu)$. The teacher-student scenario in which our results apply corresponds to the following: the teacher network \eqref{committee} (or \eqref{ex:GLM} in general) generates $\widetilde Y$ from the data $\theta_{\rm out}$. The pairs $(\widetilde Y_{\mu}, \theta_\mu)$ are then used in order to train (i.e., learn the weights of) a student network with exactly the same architecture.

A richer example is a multi-layer version of the GLM above:
\begin{align}\label{ex:multiLayer}
\begin{cases}
X^{(L)}_{i_L} \sim p_{\rm out}^{(L)}\big(\,\cdot\, \big| \sum_{j=1}^{n_{L-1}} \theta_{i_L j}^{(L)}X_{j}^{(L-1)}\big)\,, &1\le i_L\le n_L\,,\\
X_{i_{L-1}}^{(L-1)}\sim  p_{\rm out}^{(L-1)}\big(\,\cdot\, \big| \sum_{j=1}^{n_{L-2}} \theta_{i_{L-1} j}^{(L-1)}X_{j}^{(L-2)}\big)\,, &1\le i_{L-1}\le n_{L-1}\,,\\
\qquad\qquad\qquad\qquad\vdots\\
X_{i_1}^{(1)}\sim  p_{\rm out}^{(1)}\big(\,\cdot \,\big| \sum_{j=1}^{n_0} \theta_{i_1 j}^{(1)}X_{j}^{(0)}\big)\,,  &1\le i_1\le n_1\,.	
\end{cases}
\end{align}
with an input $X^{(0)}\sim P_{0}$ factorized as $P_0=p_0^{\otimes n_0}$. In this model $(X^{(\ell)})_{\ell=1}^{L-1}$ represent intermediate hidden variables, the visible variable $X^{(L)}=\widetilde Y$ is the data, and $\theta_{\rm out}=(\theta^{(\ell)})$ with $\theta^{(\ell)}$ representing the weight matrix at the $\ell$-th layer. Note that in the single layer version \eqref{ex:GLM}, $\theta_{\rm out}$ was instead interpreted as data points and $X$ was the weight vector to learn/infer. Also $n^{(\ell)}=\Theta(n_0)$ for $\ell=1,\ldots,L$. This scaling for the variables sizes is often assumed in order not to make the inference of $X^{(0)}$ from $X^{(L)}$ impossible, nor trivial. This multi-layer GLM has been studied by various authors for the $K=1$ case and when the output components $X_{j}^{(\ell)}$ are scalars \cite{manoel_multi-layer_2017,reeves_additivity_2017,fletcher_inference_2017,Gabrie:NIPS2018,DBLP:journals/corr/abs-1903-01293}. But one can define generalizations where these are multi-dimensional, in which case overlap matrices naturally arise.

A final example could be another combination of complex statistical models such as, e.g., the following symmetric matrix factorization problem where the hidden low-rank representation $X$ of the matrix is itself generated from a generalized linear model over a more primitive signal $X^{(0)}$:
\begin{align}\label{ex:mix}
\begin{cases}
\widetilde Y_{ij}=n^{-1/2}\,\sum_{k=1}^K X_{ik}X_{jk}+\widetilde Z_{ij}\,,  &1\le i\le j\le n\,,\\
X_i \sim p_{\rm out}\big(\,\cdot\, \big| \sum_{j=1}^{n_0} \theta_{i j}X_{j}^{(0)}\big)\,, &1\le i\le n\,.	
\end{cases}
\end{align}
Here again some factorization structure for the prior $P_0=p_0^{\otimes n_0}$ of $X^{(0)}$ may be assumed, and $n=\Theta(n_0)$. Such model has recently been studied in \cite{aubin2019spiked}.
\subsection{The Nishimori identity}
The following identity is a simple consequence of Bayes' formula, and applies to optimal Bayesian inference.
\begin{lemma}[Nishimori identity]\label{NishId}
Let $(X,Y)$ be a couple of random variables with joint distribution $P(X, Y)$ and conditional distribution 
$P(X | Y)$. Let $k \geq 1$ and let $x^{(1)}, \dots, x^{(k)}$ be i.i.d.\ samples from the conditional distribution $P(\,\cdot\,|Y)$. These are called ``replicas''. Let us denote $\langle - \rangle$ the expectation operator w.r.t. the product conditional distribution $P(x^{(1)}|Y)P(x^{(2)}|Y)\dots P(x^{(k)}|Y)$ acting on the replicas, and $\mathbb{E}$ the expectation w.r.t. the joint distribution $P(X,Y)$. Then, for any continuous bounded function $g$,
\begin{align*}
\mathbb{E} \big\langle g(Y,x^{(1)}, \dots, x^{(k)}) \big\rangle
=
\mathbb{E} \big\langle g(Y, X, x^{(2)}, \dots, x^{(k)}) \big\rangle\,. 
\end{align*}	
\end{lemma}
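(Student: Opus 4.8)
The plan is to prove the Nishimori identity directly from Bayes' formula by a relabelling-of-replicas argument. First I would observe that since $x^{(1)},\dots,x^{(k)}$ are i.i.d.\ draws from the posterior $P(\,\cdot\,|Y)$, and since the signal $X$ is itself distributed according to $P(\,\cdot\,|Y)$ once we condition on $Y$ --- this is the content of Bayes' rule, namely that the joint law $P(X,Y)$ can be rewritten as $P(Y)P(X|Y)$ --- the tuple $(X,x^{(2)},\dots,x^{(k)})$ and the tuple $(x^{(1)},x^{(2)},\dots,x^{(k)})$ have the \emph{same} joint distribution given $Y$. Concretely, conditionally on $Y$ the random variables $X,x^{(1)},x^{(2)},\dots,x^{(k)}$ are exchangeable: they are $k+1$ i.i.d.\ samples from $P(\,\cdot\,|Y)$ (the ``hidden'' signal $X$ being the zeroth replica). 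Hence swapping the roles of $X$ and $x^{(1)}$ leaves every conditional expectation invariant.

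Carrying this out, I would write the left-hand side explicitly as an integral:
\begin{align*}
\mathbb{E}\big\langle g(Y,x^{(1)},\dots,x^{(k)})\big\rangle
= \int dP(Y)\int \prod_{a=1}^k dP(x^{(a)}|Y)\, g(Y,x^{(1)},\dots,x^{(k)})\,.
\end{align*}
Now use Bayes' formula in the form $dP(Y)\,dP(x^{(1)}|Y) = dP(x^{(1)})\,dP(Y|x^{(1)})$, i.e.\ the pair $(Y,x^{(1)})$ has the same joint law as the pair $(Y,X)$. Renaming the integration variable $x^{(1)}$ as $X$ (and noting that the remaining factors $\prod_{a=2}^k dP(x^{(a)}|Y)$ and the function $g$ depend on $x^{(1)}$ only through the slot we are renaming) turns the expression into
\begin{align*}
\int dP(X)\int dP(Y|X)\int \prod_{a=2}^k dP(x^{(a)}|Y)\, g(Y,X,x^{(2)},\dots,x^{(k)})
= \mathbb{E}\big\langle g(Y,X,x^{(2)},\dots,x^{(k)})\big\rangle\,,
\end{align*}
which is exactly the right-hand side. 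The hypothesis that $g$ is continuous and bounded is only needed to guarantee that all the integrals are finite and that Fubini's theorem applies, so one can freely interchange the order of integration over $Y$ and over the replicas.

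There is no serious obstacle here --- the identity is essentially a tautology once one recognises that ``replica'' and ``planted signal'' are statistically indistinguishable given the observations. The only point requiring a little care is the measure-theoretic justification: the conditional distribution $P(\,\cdot\,|Y)$ must be a regular conditional probability so that the iterated integrals make sense, and one should check that the ``change of variables'' step, which is really just an application of the disintegration theorem / the definition of conditional expectation, is valid for the (possibly vector- or tensor-valued) random objects at hand. In the concrete models of this paper the posterior \eqref{post} is given by an explicit density against the prior, so $P(x^{(1)}|Y)$ is absolutely continuous and the manipulation above is completely rigorous; the general statement then follows by the usual approximation arguments for bounded continuous $g$. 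I would also remark that by iterating the identity (applying it successively to $x^{(2)},x^{(3)},\dots$) one may replace several replicas by independent copies of the signal $X$, a form that is used repeatedly in the sequel.
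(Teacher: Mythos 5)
Your proof is correct and uses essentially the same argument as the paper: by Bayes' rule the pair $(Y,X)$ has the same law as $(Y,x^{(1)})$, i.e.\ conditionally on $Y$ the signal is just another replica, so the two $(k+1)$-tuples $(Y,x^{(1)},\dots,x^{(k)})$ and $(Y,X,x^{(2)},\dots,x^{(k)})$ are equal in law and the identity follows; your explicit integral rewriting and measure-theoretic remarks merely spell out the paper's one-line resampling argument in more detail.
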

\begin{proof}
It is equivalent to sample the couple $(X,Y)$ according to its joint distribution or to sample first $Y$ according to its marginal distribution and then to sample $X$ conditionally on $Y$ from the conditional distribution. Thus the two $(k+1)$-tuples $(Y,x^{(1)}, \dots,x^{(k)})$ and $(Y, X, x^{(2)},\dots,x^{(k)})$ have the same law.	
\end{proof}

In practice the Nishimori identity\footnote{This identity has been abusively called ``Nishimori identity'' in the statistical physics literature despite that
it is a simple consequence of Bayes' formula. The 
``true'' Nishimori identity concerns models with one extra feature, namely a 
gauge symmetry which allows to eliminate the input signal, and the expectation over the signal $X$ in expressions of the form $\mathbb{E}\langle -\rangle$
can therefore be dropped.} allows to ``replace'' the ground-truth signal $X$ by an independent replica, and vice-versa, in expressions involving only other replicas and the observations. Again, by replicas we mean conditionally independent samples drawn according to the posterior. 

\section{The vectorial Gaussian channel perturbation}\label{sec:Gauss_channel}
In order to ``force'' the overlap to concentrate we need to have access to infinitesimal side-information $Y$ in addition to the observations $\widetilde Y$. This side information is coming from the following vectorial Gaussian channel:
\begin{align}
Y = X \lambda_n^{1/2} + Z\,, \qquad \text{or componentwise} \qquad Y_i = \lambda_n^{1/2}\, X_i +  Z_i \quad \text{for}\quad 1\le i\le n\,.\label{pert_channel}
\end{align}
Here the signal $X$ is the same as in the base inference model. The observations $(Y_i)$, the signal components $(X_i)$ and i.i.d. Gaussian noise variables $(Z_i)\sim {\cal N}(0,I_K)^{\otimes n}$ are all $K$-dimensional vectors. The signal-to-noise (SNR) matrix controlling the signal strength $$\lambda_n\equiv s_n \tilde \lambda\,,$$ with a positive sequence $(s_n)\in(0,1]^{\mathbb{N}}$ that tends to $0_+$ slowly enough (the rate will be specified later), and $\tilde\lambda$ belongs to ${\cal D}_{K}$ defined as
%
\begin{align*}
	{\cal D}_{K} \equiv \big\{\tilde\lambda\in\mathbb{R}^{K\times K} : \tilde\lambda_{kk'}=\tilde\lambda_{k'k}\in(1,2) \ \forall \ k\neq k', \tilde\lambda_{kk}\in(2K,2K+1) \ \forall\ k  \big\}\,.
\end{align*}
Therefore, $\lambda_n$ belongs to the set
\begin{align}\label{DnK}
	{\cal D}_{n,K} \equiv \big\{\lambda\in\mathbb{R}^{K\times K} : \lambda_{kk'}=\lambda_{k'k}\in(s_n,2s_n) \ \forall \ k\neq k', \lambda_{kk}\in(2Ks_n,(2K+1)s_n) \ \forall\ k  \big\}\,.
\end{align}
 Matrices belonging to ${\cal D}_{n,K}$ are symmetric strictly diagonally dominant with positive entries (of the order of $s_n$) and thus ${\cal D}_{n,K}\subset {\cal S}_{K}^+$, where ${\cal S}_{K}^+$ is the set of symmetric positive definite matrices of dimension $K \times K$, see \cite{horn1990matrix}. As $\lambda_n\in {\cal D}_{n,K}$ it possesses a unique principal square root matrix denoted $$\lambda_n^{1/2}=\sqrt{s_n}\,\tilde\lambda^{1/2}\,.$$ The advantage of working with the ensemble ${\cal D}_{n,K}$ is the following. We require that the SNR matrix $\lambda_n$ always belong to ${\cal S}_{K}^+$ so that its square root is real and unique. For a generic positive matrix in ${\cal S}_{K}^+$, but not necessarily in ${\cal D}_{n,K}$, one cannot vary its (symmetric) elements \emph{independently} because doing so the matrix might not be positive definite anymore; the constraint $\lambda_n\in{\cal S}_K^+$ is a ``global'' constraint over the matrix elements. In contrast if $\lambda_n \in {\cal D}_{n,K}$ we can vary its elements independently (as long as it remains in ${\cal D}_{n,K}$) without the possibility that $\lambda_n$ falls out of ${\cal S}_{K}^+$. 

The perturbed inference model is then the following obervation model:
\begin{align}\label{2channels}
	\begin{cases}
	\widetilde Y  \!\!\!\!\! &\sim P_{\rm out}(\,\cdot\, |X,\theta_{\rm out})\,,\\
	Y_i \!\!\!\!\!&=  \lambda_n^{1/2}\,X_i +  Z_i\,, \quad 1\le i\le n\,.
	\end{cases}
\end{align}
It is called ``perturbed model'' because the original observation model has been slightly modified by adding new observations coming from \eqref{pert_channel} that are ``weak'' (as $s_n\to 0_+$). The perturbation Hamiltonian associated with the observation channel \eqref{pert_channel} is
\begin{align}\label{added-pert}
{\cal H}_{\lambda}(x,Y(X,Z),\lambda_n)\equiv\sum_{i=1}^{n} \Big( \frac {1}{2}x_i^\intercal\lambda_n x_i - x_{i}^\intercal \lambda_n X_{i}-x_{i}^\intercal \lambda_n^{1/2} \, Z_{i}\Big)
\end{align} 
using the symmetry of the SNR matrix. The total Hamiltonian is therefore the sum of the base Hamiltonian and the perturbation one. The posterior of the perturbed model, written in the standard Gibbs-Boltzmann form of statistical mechanics, is
\begin{align}
P(x|\widetilde Y,Y,\theta,\lambda_n) = \frac{1}{{\cal Z}_n(\widetilde Y,Y,\theta,\lambda_n)} P_0(x|\theta_0)\exp\{-{\cal H}_0(x,\widetilde Y,\theta_{\rm out})-{\cal H}_{\lambda}(x,Y,\lambda_n)\} \label{post_pert}
\end{align}
where again the partition function ${\cal Z}_n(\widetilde Y,Y,\theta,\lambda_n)$ is simply the normalization constant. We also define the {\it Gibbs-bracket} $\langle -\rangle$ as the expectation operator w.r.t. the posterior of the perturbed model: 
\begin{align}
	\langle g\rangle \equiv \int dP(x|\widetilde Y,Y,\theta,\lambda_n)\, g(x) \label{bracket}
\end{align}
for any function $g$ s.t. its expectation exists. Thus $\langle g\rangle$ depends on the quenched variables $(\widetilde Y,Y,\theta)$ and the perturbation parameter $\lambda_n$.

It is crucial to notice the following. The perturbation is constructed from an inference channel \eqref{pert_channel} which form is known (i.e., it is known that the noise is a realization of ${\cal N}(0,I_n)$ and the signal-to-noise ratio matrix $\lambda_n$ is given). Therefore the perturbed model \eqref{2channels} is a proper inference problem in the optimal Bayesian inference setting. This means that in addition to the data $(\widetilde Y,Y)$, the statistician fully knows the data generating model, namely the likelihood $P_{\rm out}$ and the additive Gaussian nature of the noise in the second channel in \eqref{2channels}, the prior $P_0$ as well as all hyper-parameters $(\theta,\lambda_n)$, and is therefore able to write the true posterior \eqref{post_pert} of the model when estimating the signal. As a consequence {\it the Nishimori identity Lemma~\ref{NishId} applies to the perturbed model and its bracket $\langle -\rangle$}.	

An important quantity is the averaged {\it free energy} of the perturbed model:
\begin{align}
f_n=f_{n}(\lambda_n)&\equiv-\frac{1}{n}\EE \ln {\cal Z}_n(\widetilde Y,Y,\theta,\lambda_n)\nn
&=-\frac{1}{n}\EE \ln \int dP_0(x|\theta_0)\exp\{-{\cal H}_0(x,\widetilde Y,\theta_{\rm out})-{\cal H}_{\lambda}(x,Y,\lambda_n)\}\label{av_f}
\end{align}
where the above expectation $\mathbb{E}\equiv \mathbb{E}_\theta\mathbb{E}_{X|\theta_0}\mathbb{E}_{\widetilde{Y}|X,\theta_{\rm out}}\mathbb{E}_{Y|X,\lambda_n}$ carries over the random hyper-parameters, the ground-truth signal (given $\theta_0$) and the data generated according to \eqref{2channels}, but not over $\lambda_n$ which remains fixed. Later we will average quantities w.r.t. $\lambda_n\in{\cal D}_{n,K}$, but in this case we will explicitely write $\EE_\lambda$. 

In order to prove the concentration of the overlap we need the following {\it crucial} hypothesis:
\begin{hypothesis}[Free energy concentration]
	The free energy \eqref{av_f} of the perturbed model concentrates at the optimal rate, namely there exists a constant $C_f=C_f(K,P_0,P_{\rm out},P_{\theta_0},P_{\theta_{\rm out}})$ that may depend on everything but $n$, and s.t.
	\begin{align}
	\EE\Big[\Big(-\frac{1}{n} \ln {\cal Z}_n(\widetilde Y,Y,\theta,\lambda_n)-f_n(\lambda_n)\Big)^2\Big]\le \frac{C_f}{n}\,. \label{hyp:f_conc}
	\end{align}
\end{hypothesis} 	

There are some remarks to be made here. The first one is related to the scenarios where this hypothesis can be verified. For purely generic optimal inference models without any restricting assumptions on the form of the distributions $(P_0,P_{\rm out},P_{\theta_0},P_{\theta_{\rm out}})$ it is generally very hard, if not wrong, to try proving \eqref{hyp:f_conc}. The model must be ``random enough'' and possess some underlying factorization structure for such hypothesis to be true. The most studied case in the literature is when the prior and the likelihood factorize, namely $P_0=p_0^{\otimes n}$ and the data points are i.i.d. given $(X,\theta_{\rm out})$. The examples \eqref{ex:tensorFacto}--\eqref{committee} fall in this class. Under such independence/factorization assumptions it is quite straightforward to prove that the free energy concentrates using standard techniques\footnote{In the context of statistical physics of spin glasses, these factorization and independence properties translate into the fact that the external fields $(h_{i})$ act independently on each spins, and the coupling constants $(J_{ij})$ act pairwise (or on a finite subsets of variables for $p$-spin types of models), and are independently drawn from some distribution. Therefore, in this context, free energy concentration is standard to prove. This is not necessarily the case in inference with generic prior and noise models, that can induce correlations preventing self-averaging. This is the reason why free energy concentration is here stated as an hypothesis.} (see, e.g., \cite{BarbierM17a,barbier2017phase}). But such simple factorization properties are not always there, as illustrated by examples \eqref{ex:multiLayer}, \eqref{ex:mix}. In these two last examples it is a perfectly valid question to wonder whether the overlap of the hidden variables do concentrate\footnote{Note that proving concentration of the overlap for a hidden variable requires a perturbation of the form \eqref{pert_channel} over the hidden variable, not over $X^{(0)}$, which in this case is just interpreted as a constitutive element of the prior of the hidden variable of interest, see \cite{Gabrie:NIPS2018} where this is done.} (this question is crucial in the analysis of \cite{Gabrie:NIPS2018}). The hidden variables have very complex structured prior (i.e., probability distribution), with highly non-trivial factorization properties, in which case proving \eqref{hyp:f_conc} requires work. See, e.g., \cite{Gabrie:NIPS2018} where this has been done for the multi-layer GLM \eqref{ex:multiLayer} with a single hidden layer ($L=2$) where this is already challenging. 

The second remark is that the perturbation does not change the limit of the averaged free energy:

\begin{lemma}[The base and perturbed models have same asymptotic averaged free energy]\label{lemma:same_f} We have $$|f_{0,n} - f_{n} |	\le S^2(2K+1)K^2s_n\,.$$ Therefore $f_{0,n}$ and $f_{n}$ have same thermodynamic limit, provided it exists.\\

\begin{proof}
It follows from identities \eqref{80}, \eqref{80_ll} in section \ref{sec:4.1} that $\|\nabla_{\lambda_n} f_{n}\|_{\rm F} \le \|\EE\langle Q\rangle\|_{\rm F}$. By the mean value theorem $|f_{0,n} - f_{n} |\le \|\nabla_{\lambda_n} f_{n}\|_{\rm F} \| \lambda_n\|_{\rm F}$ and thus $|f_{0,n} - f_{n} |\le  \|\EE\langle Q\rangle\|_{\rm F} \| \lambda_n\|_{\rm F}$. By definition matrices in ${\cal D}_{n,K}$ have positive entries bounded by $(2K+1)s_n$. Therefore as $\lambda_n\in{\cal D}_{n,K}$ so $\| \lambda_n\|_{\rm F}\le (2K+1)Ks_n$. By hypothesis the prior support is contained in $[-S,S]$ so we have $\|\EE\langle Q\rangle\|_{\rm F}\le S^2K$, and thus the result.
\end{proof}
\end{lemma}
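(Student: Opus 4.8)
The plan is to regard $f_n$ as a function of the SNR matrix, show it is differentiable in $\lambda_n$ with a gradient controlled \emph{uniformly} by the (trivially bounded) expected overlap, and then integrate this bound along the segment joining $\lambda_n=0$ — where the perturbation disappears and $f_n$ collapses to $f_{0,n}$ — to the actual value $\lambda_n\in{\cal D}_{n,K}$. In effect the statement is an instance of an I-MMSE type identity specialised to the vectorial Gaussian side-channel \eqref{pert_channel}.

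First I would record the two inputs. At $\lambda_n=0$ the perturbation Hamiltonian \eqref{added-pert} vanishes identically in $x$, so ${\cal Z}_n(\widetilde Y,Y,\theta,0)={\cal Z}_{0,n}(\widetilde Y,\theta)$ and hence $f_n(0)=f_{0,n}$; and for $t\in(0,1]$ the matrix $t\lambda_n$ still lies in ${\cal S}_K^+$, so its principal square root exists and the perturbed model at SNR $t\lambda_n$ is well defined and has a free energy differentiable in $t$ on $(0,1)$. The computational heart is the differentiation of \eqref{av_f}: writing $\EE_{Y|X,\lambda_n}$ as a Gaussian expectation over $Z$ so that the $\lambda_n$-dependence sits entirely in the integrand, differentiating under the expectation (legitimate because the prior has bounded support and $Z$ is Gaussian), and then removing the explicit $Z$ by Gaussian integration by parts and simplifying with the Nishimori identity Lemma~\ref{NishId} — applicable since \eqref{2channels} is a genuine Bayes-optimal model — one is left with $\nabla_{\lambda_n} f_n$ expressed purely through $\EE\langle Q\rangle$. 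This is exactly the content of the forthcoming identities \eqref{80}, \eqref{80_ll}, which give $\|\nabla_{\lambda_n} f_n\|_{\rm F}\le \|\EE\langle Q\rangle\|_{\rm F}$.

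Granting this, the mean value theorem applied along $\{t\lambda_n:t\in[0,1]\}$ gives $|f_{0,n}-f_n(\lambda_n)|\le \sup_{t}\|\nabla_{\lambda_n} f_n(t\lambda_n)\|_{\rm F}\,\|\lambda_n\|_{\rm F}\le \|\EE\langle Q\rangle\|_{\rm F}\,\|\lambda_n\|_{\rm F}$, the bound being uniform in $t$ since it only involves the prior support. It then remains to estimate the two norms: from $|X_{ik}|,|x_{ik'}|\le S$ one gets $|Q_{kk'}|\le S^2$ entrywise, hence $\|\EE\langle Q\rangle\|_{\rm F}\le S^2K$; and every entry of $\lambda_n\in{\cal D}_{n,K}$ is positive and at most $(2K+1)s_n$, hence $\|\lambda_n\|_{\rm F}\le (2K+1)Ks_n$. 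Multiplying yields $|f_{0,n}-f_n|\le S^2(2K+1)K^2 s_n$, and since $s_n\to 0_+$ the two free energies have the same limit whenever it exists.

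The main obstacle is the gradient identity of the second paragraph, specifically carrying the matrix square root $\lambda_n^{1/2}$ through the differentiation: $\lambda\mapsto\lambda^{1/2}$ is smooth only on the interior of ${\cal S}_K^+$, not at the origin, so one must either differentiate on the open segment $t\in(0,1)$ and pass to the limit, or — cleaner — integrate by parts in $Z$ \emph{before} differentiating, which turns the awkward term $x_i^\intercal\lambda_n^{1/2}Z_i$ into a contribution where two factors $\lambda_n^{1/2}$ recombine into $\lambda_n$, after which the remaining derivatives are elementary and Nishimori collapses everything onto $Q$. The rest — differentiation under $\EE$, the mean value theorem, and the two Frobenius-norm estimates — is routine given the bounded prior.
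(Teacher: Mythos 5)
Your proposal is correct and follows essentially the same route as the paper's proof: bound $\|\nabla_{\lambda_n} f_n\|_{\rm F}$ by $\|\EE\langle Q\rangle\|_{\rm F}$ via the identities \eqref{80}, \eqref{80_ll}, apply the mean value theorem along the segment joining $0$ to $\lambda_n$, and conclude with the elementary estimates $\|\EE\langle Q\rangle\|_{\rm F}\le S^2K$ and $\|\lambda_n\|_{\rm F}\le (2K+1)Ks_n$. Your extra care about the non-smoothness of $\lambda\mapsto\lambda^{1/2}$ at the origin (differentiating on the open segment, or integrating by parts in $Z$ first) is a welcome refinement of a point the paper treats implicitly, but it does not change the argument.
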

\subsection{Main results}
All along this paper we denote $C(U)$ a generic positive numerical constant depending only on the parameters $U$. E.g., $C(C_f,K,S)>0$ depends only on $C_f$ appearing in \eqref{hyp:f_conc}, the variables dimensionality $K$ and on the prior support $S$. Let us denote the average over the matrix $\lambda_n\in {\cal D}_{n,K}$ appearing in the perturbation \eqref{pert_channel} as $$\EE_{\lambda}[-]\equiv \frac{1}{{\rm Vol}({\cal D}_{n,K})}\int_{{\cal D}_{n,K}} d\lambda_n\,[-]\,, \quad \text{with}\quad  {\rm Vol}({\cal D}_{n,K})\equiv\int_{{\cal D}_{n,K}} d\lambda_n=s_n^{K(K+1)/2}\,.$$
Here ${\rm Vol}({\cal D}_{n,K})$ is the volume of ${\cal D}_{n,K}$ which vanishes as $n\to +\infty$ (there are $K(K+1)/2$ independent entries in $\lambda_n$ as it is symmetric). Recall the notation $\langle -\rangle$ for the expectation  w.r.t. the posterior of the perturbed inference model \eqref{bracket}.

In order to give our first result we need to introduce the overlap between two replicas 
\begin{align}\label{Q12}
Q^{(12)}\equiv\frac1n \sum_{i=1}^n x^{(1)}_i(x_i^{(2)})^\intercal\,,	
\end{align}
where, again, replicas are conditionally i.i.d. random variables drawn accroding to the posterior \eqref{post_pert} of the perturbed model (and thus share the same quenched variables): $(x^{(1)},x^{(2)})\sim P(\,\cdot\,|\widetilde Y,Y,\theta,\lambda_n)^{\otimes 2}$. By a slight abuse of notation let us continue to use the same bracket notation for the expectation of functions of replicas w.r.t. to the product posterior measure: $$\big\langle g(x^{(1)},x^{(2)})\big\rangle\equiv\int dP(x^{(1)}|\widetilde Y,Y,\theta,\lambda_n)dP(x^{(2)}|\widetilde Y,Y,\theta,\lambda_n) g(x^{(1)},x^{(2)})\,.$$ 

Our main results are the following concentration theorems for the overlap in a (perturbed) model of optimal Bayesian inference. We start with the first type of fluctuations, namely the fluctuations of the overlap w.r.t. the posterior distribution, or what is called ``thermal fluctuations'' in statistical mechanics. Note that for controlling these fluctuations we do not need that the free energy concentrates, i.e., the hypothesis \eqref{hyp:f_conc} is not required. As a consequence this result is valid even for very complex models without any factorization properties for the signal's prior nor for the likelihood (as long as they are defined in the optimal Bayesian setting). This result is a consequence of the precense of the perturbation combined with the Nishimori identity.
\begin{theorem}[Thermal fluctuations of $Q$]\label{prop:thermalBOund}
Assume that the perturbed inference model is s.t. the Nishimori identity Lemma~\ref{NishId} holds. Let $(s_n)\in(0,1]^{\mathbb{N}}$ a positive sequence verifying $s_n\to 0_+$ and $s_nn\to+\infty$. There exists positive constants $C(K,S)$ s.t.
\begin{align}
\EE_\lambda\EE\big\langle \|Q- \langle Q\rangle\|_{\rm F}^2 \big\rangle &\le \frac{C(K,S)}{\sqrt{s_nn}} \,,\label{36}\\
\EE_\lambda\EE\big\langle \|Q-\langle Q^{(12)}\rangle\|_{\rm F}^2	\big\rangle   &\le \frac{C(K,S)}{\sqrt{s_nn}} \,.\label{2terms_controlled}
\end{align}
\end{theorem}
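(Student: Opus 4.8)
The plan is to deduce the whole theorem from the second bound. First, \eqref{2terms_controlled} implies \eqref{36}: for every realisation of the quenched data $\langle Q\rangle$ is the minimiser of $c\mapsto\langle\|Q-c\|_{\rm F}^2\rangle$ over constant $K\times K$ matrices, so $\langle\|Q-\langle Q\rangle\|_{\rm F}^2\rangle\le\langle\|Q-\langle Q^{(12)}\rangle\|_{\rm F}^2\rangle$ pointwise; applying $\EE_\lambda\EE$ gives \eqref{36} from \eqref{2terms_controlled}. Moreover the Nishimori identity (Lemma~\ref{NishId}) — replacing the planted $X$ inside $Q$ by an independent replica, the matrix $\langle Q^{(12)}\rangle$ being a function of the observations only — gives the orthogonal decomposition $\langle\|Q-\langle Q^{(12)}\rangle\|_{\rm F}^2\rangle=\langle\|Q-\langle Q\rangle\|_{\rm F}^2\rangle+\|\langle Q\rangle-\langle Q^{(12)}\rangle\|_{\rm F}^2$ and the identity $\EE\langle\|Q-\langle Q^{(12)}\rangle\|_{\rm F}^2\rangle=\EE\langle\|Q^{(12)}-\langle Q^{(12)}\rangle\|_{\rm F}^2\rangle$, so it is enough to control the genuine thermal fluctuation $\EE_\lambda\EE\langle\|Q^{(12)}-\langle Q^{(12)}\rangle\|_{\rm F}^2\rangle$.

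\textbf{Step 2: the conjugate matrix and the collapse identities.} Writing the perturbation Hamiltonian compactly as $\mathcal H_{\lambda}=\sum_i\big(\tfrac12 x_i^\intercal\lambda_n x_i-x_i^\intercal\lambda_n^{1/2}Y_i\big)$, I would introduce, for each of the $K(K+1)/2$ independent coordinates $a$ of the symmetric matrix $\lambda_n$, the observable $\mathcal L_a\defeq-\tfrac1n\partial_{(\lambda_n)_a}\mathcal H_{\lambda}$; collected into a symmetric $K\times K$ matrix $\mathcal L$, these are, up to the term coming from $\partial_{(\lambda_n)_a}\lambda_n^{1/2}$ (which is linear in the noise $Z$), a fixed affine combination of $Q$ and of the self-overlap $P\defeq\tfrac1n\sum_i x_ix_i^\intercal$. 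The first ingredient is a set of identities, obtained by combining the Nishimori identity with Gaussian integration by parts in $Z$, showing that after the disorder average the self-overlap and the noise contributions collapse onto $\EE\langle Q\rangle$: this yields $\nabla_{\lambda_n}f_n$ as the symmetrisation of $-\EE\langle Q\rangle$ up to $\mathcal O_S(1/n)$ — exactly identities \eqref{80}, \eqref{80_ll} used for Lemma~\ref{lemma:same_f} — and, applied once more, shows that the quantity of Step~1 is controlled by the thermal fluctuations $\sum_a\EE_\lambda\EE\big[\langle\mathcal L_a^2\rangle-\langle\mathcal L_a\rangle^2\big]$ of the conjugate matrix, plus corrections of order $1/(s_nn)$; the passage from the fluctuation of $\mathcal L$ to that of $Q$ goes through a Cauchy–Schwarz estimate on the cross terms carrying $P$ and $Z$, which is what degrades the rate from $1/(s_nn)$ to $1/\sqrt{s_nn}$.

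\textbf{Step 3: integrating the Hessian over $\mathcal D_{n,K}$.} The thermal covariance of $\mathcal L$ is tied to the free energy by $\EE\big[\langle\mathcal L_a^2\rangle-\langle\mathcal L_a\rangle^2\big]=-\tfrac1n\partial^2_{(\lambda_n)_a}f_n-\tfrac1n\EE\langle\partial_{(\lambda_n)_a}\mathcal L_a\rangle$. Integrating the diagonal Hessian entry first in the coordinate $(\lambda_n)_a$ — an interval of length of order $s_n$ — produces the boundary values of $\partial_{(\lambda_n)_a}f_n$, bounded uniformly in norm by $\|\EE\langle Q\rangle\|_{\rm F}\le S^2K$; integrating the remaining $K(K+1)/2-1$ coordinates and dividing by $\mathrm{Vol}(\mathcal D_{n,K})=s_n^{K(K+1)/2}$ and by $n$ leaves $\EE_\lambda\EE\big[\langle\mathcal L_a^2\rangle-\langle\mathcal L_a\rangle^2\big]=\mathcal O(1/(s_nn))$. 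The correction $\EE\langle\partial_{(\lambda_n)_a}\mathcal L_a\rangle$ carries a second derivative of $\lambda_n^{1/2}$, of size $s_n^{-3/2}$, but Gaussian integration by parts turns the factor of $Z$ it contains into a factor $\lambda_n^{1/2}=\sqrt{s_n}\,\tilde\lambda^{1/2}$, so it is also $\mathcal O(1/(s_nn))$. Since $s_nn\to+\infty$ we have $1/(s_nn)\le 1/\sqrt{s_nn}$ eventually, and together with Step~2 this gives the claimed $C(K,S)/\sqrt{s_nn}$.

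\textbf{The main obstacle.} The genuinely new point compared with the scalar case is twofold: the SNR matrix $\lambda_n$ is \emph{symmetric}, so the perturbation only directly pins down the symmetric combination $Q+Q^\intercal$, whereas $Q$ is a general non-symmetric $K\times K$ matrix; and the presence of $\lambda_n^{1/2}$, whose derivatives with respect to the entries of $\lambda_n$ are full matrices (solving Sylvester equations), so the clean differentiation identities of the scalar case acquire extra terms. Both are handled by working on $\mathcal D_{n,K}$ rather than on a generic cone of positive definite matrices: on $\mathcal D_{n,K}$ all $K(K+1)/2$ entries of $\lambda_n$ can be moved independently without ever leaving $\mathcal S_K^+$, so every relevant second derivative of $f_n$ is accessible, while the Nishimori identity — which forces $\EE\langle Q\rangle$ to be symmetric and ties together the fluctuations of $Q_{kk'}$ and $Q_{k'k}$ — upgrades control of the symmetric part to control of all of $Q$. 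The careful tracking of the $\lambda_n^{1/2}$-derivatives against the powers of $s_n$ generated by $\lambda_n=s_n\tilde\lambda$, so that all corrections remain of order $1/(s_nn)$, is the technical heart of the argument.
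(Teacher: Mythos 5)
Your proposal follows essentially the same route as the paper: the same conjugate matrix $\mathcal{L}$ obtained by differentiating the perturbation Hamiltonian, the same free-energy identities \eqref{80}, \eqref{80_ll} differentiated once more and combined with Gaussian integration by parts, the Nishimori identity and the symmetrization formula \eqref{57} to handle the $\lambda^{1/2}$-derivatives, the same integration of the free-energy Hessian over ${\cal D}_{n,K}$ giving the $1/(s_nn)$ control of $\mathcal{L}$, and the same Cauchy--Schwarz step responsible for the degraded $1/\sqrt{s_nn}$ rate. The only (cosmetic) difference is the order: you establish \eqref{2terms_controlled} first via the Nishimori replacement of $X$ by a replica and deduce \eqref{36} from the $L^2$-projection property of $\langle Q\rangle$, whereas the paper proves \eqref{36} first and then obtains \eqref{2terms_controlled} through the decomposition \eqref{Leo_0}--\eqref{Leo}.
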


The next, stronger, result takes care of the additional fluctuations due to the quenched randomness, and requires the free energy concentration hypothesis:
\begin{theorem}[Total fluctuations of $Q$]\label{thm:Q_con}
Assume that the perturbed inference model is s.t. $i)$ its free energy concentrates as in identity \eqref{hyp:f_conc}; $ii)$ the Nishimori identity Lemma~\ref{NishId} holds. Let $(s_n)\in(0,1]^{\mathbb{N}}$ a positive sequence verifying $s_n\to 0_+$ and $s_n^4n\to+\infty$. There exists a positive constant $C(C_f,K,S)$ s.t.
\begin{align*}
\EE_\lambda\EE\big\langle \|Q-\EE\langle Q\rangle\|_{\rm F}^2\big\rangle \le \frac{C(C_f,K,S)}{(s_n^4n)^{1/6}}\,.
\end{align*}	
\end{theorem}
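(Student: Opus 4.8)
The strategy is the standard two-step decomposition of overlap fluctuations into a ``thermal'' part and a ``quenched'' (disorder) part, combined with an averaging over the SNR matrix $\lambda_n\in{\cal D}_{n,K}$. Write $\|Q-\EE\langle Q\rangle\|_{\rm F}^2 = \|Q-\langle Q\rangle\|_{\rm F}^2 + \|\langle Q\rangle - \EE\langle Q\rangle\|_{\rm F}^2 + 2\langle Q - \langle Q\rangle, \langle Q\rangle - \EE\langle Q\rangle\rangle_{\rm F}$; taking $\EE\langle-\rangle$ the cross term vanishes, so it suffices to bound the thermal fluctuation $\EE_\lambda\EE\langle\|Q-\langle Q\rangle\|_{\rm F}^2\rangle$ and the quenched fluctuation $\EE_\lambda\EE[\|\langle Q\rangle - \EE\langle Q\rangle\|_{\rm F}^2]$ separately. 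The first is already controlled by Theorem~\ref{prop:thermalBOund}, giving a bound of order $(s_nn)^{-1/2}$, which is negligible compared with the claimed $(s_n^4n)^{-1/6}$. So the entire content is the quenched part.

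For the quenched part the plan is to componentwise reduce to scalar free-energy derivatives. From the perturbation Hamiltonian \eqref{added-pert} one has, for each pair $(k,k')$, that $\partial_{\lambda_{n,kk'}} f_n$ is (up to the Nishimori identity, which turns $\EE\langle x_{ik}X_{ik'}\rangle$-type quantities into $\EE\langle Q\rangle$) a linear combination of entries of $\EE\langle Q\rangle$, and more importantly the second derivative $\partial^2_{\lambda_{n,kk'}} (\text{free energy})$ controls $\EE[(\langle Q_{kk'}\rangle - \EE\langle Q_{kk'}\rangle)^2]$ plus a thermal-fluctuation remainder. Concretely, I would introduce the (random) free energy $F_n(\lambda_n)\equiv -\frac1n\ln{\cal Z}_n$, note that $F_n$ is convex in each diagonal-dominant direction one is allowed to vary within ${\cal D}_{n,K}$, and use the classical convexity/derivative argument: if $g$ and $G=\EE g$ are convex and $|g-G|$ is small in $L^1$, then $g'$ is close to $G'$ on average over an interval, at the cost of an interval-width parameter. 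Averaging over $\lambda_n\in{\cal D}_{n,K}$ plays the role of this interval-averaging; the width of the box ${\cal D}_{n,K}$ scales like $s_n$, which is exactly why the exponents in the final bound involve $s_n$.

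The key steps in order: (1) record the first- and second-derivative identities for $f_n$ and for the random free energy $F_n$ w.r.t. the independent entries $\lambda_{n,kk'}$ — these are the identities \eqref{80}, \eqref{80_ll} referenced in Lemma~\ref{lemma:same_f}, and they express $\partial_{\lambda_{n,kk'}}F_n$ in terms of $\langle Q\rangle$ and $\partial^2$ in terms of the overlap covariance; (2) apply Theorem~\ref{prop:thermalBOund} to discard all thermal remainders; (3) invoke the free-energy concentration hypothesis \eqref{hyp:f_conc}, $\EE[(F_n - f_n)^2]\le C_f/n$, together with a convexity lemma (of the type: for convex $g$, $\int_a^b |g'(x)-\EE g'(x)|\,dx$ is controlled by $\sup|g-\EE g|$ over a slightly enlarged interval plus the modulus of continuity of $\EE g'$) to bound $\EE_\lambda\EE[(\langle Q_{kk'}\rangle - \EE\langle Q_{kk'}\rangle)^2]$; (4) track how the box width $\sim s_n$ and the factor $n$ enter, optimize the free parameter (the amount by which one enlarges the integration interval), and sum over the finitely many pairs $(k,k')$. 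The bound $(s_n^4n)^{-1/6}$ emerges from balancing $C_f/(n\,\delta^2\,\text{vol})$-type terms against $\delta$-type terms where $\delta$ is the enlargement and the volume is $s_n^{K(K+1)/2}$; the power $1/6$ is the signature of such a three-way optimization.

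The main obstacle is step (3)–(4): the convexity argument is clean for a single real variable, but here $F_n$ is a function of $K(K+1)/2$ correlated variables, and ${\cal D}_{n,K}$ is a box of side $\sim s_n$ rather than a fixed interval, so one must be careful that (i) varying one entry $\lambda_{n,kk'}$ keeps $\lambda_n$ inside ${\cal S}_K^+$ (this is precisely why ${\cal D}_{n,K}$ was engineered to be strictly diagonally dominant, so each coordinate can be moved freely), and (ii) the convexity is in the right variable — $F_n$ as written is convex in $\lambda_{n,kk'}$ because the Hamiltonian \eqref{added-pert} is affine in $\lambda_n$ and $-\ln\int e^{-{\cal H}}$ is convex in any affine parameter of ${\cal H}$. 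A secondary technical nuisance is that the overlap matrix is \emph{not} symmetric, so $\langle Q\rangle$ and $\langle Q^{(12)}\rangle$ must be handled with the Nishimori identity to symmetrize the relevant combinations before the convexity estimate applies cleanly; this is the new ingredient relative to the scalar case, and it is the reason Theorem~\ref{prop:thermalBOund} is stated with both \eqref{36} and \eqref{2terms_controlled}.
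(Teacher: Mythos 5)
There is a genuine gap at the heart of your step (3): the convexity (in fact concavity, with the paper's sign convention $F_n=-\frac1n\ln{\cal Z}_n$) argument applied to the free energy controls the fluctuations of $\frac{dF_n}{d\lambda_{ll'}}=\langle{\cal L}_{ll'}\rangle$, where ${\cal L}$ is the matrix \eqref{66}--\eqref{67}, and \emph{not} the fluctuations of $\langle Q_{ll'}\rangle$. The identification $\frac{df_n}{d\lambda_{ll'}}=-\EE\langle Q_{ll'}\rangle$ in \eqref{80}--\eqref{80_ll} holds only after taking the full expectation $\EE$, because it uses Gaussian integration by parts on the noise $Z$ together with the Nishimori identity; pointwise in the quenched variables, $\langle{\cal L}_{ll'}\rangle$ and $-\langle Q_{ll'}\rangle$ differ by random terms such as $\frac1n\sum_i\langle x_{il}x_{il'}\rangle$, $\frac1n\sum_i\langle x_{il'}\rangle X_{il}$ and $\frac1n\sum_i\langle x_i\rangle^\intercal\frac{d\lambda^{1/2}}{d\lambda_{ll'}}Z_i$, which do not vanish and whose own concentration is exactly what is at stake. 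So the Lemma-\ref{lemmaConvexity}-type argument plus hypothesis \eqref{hyp:f_conc} yields the quenched concentration of $\langle{\cal L}\rangle$ (the paper's Proposition~\ref{disorder-fluctuations}, hence \eqref{total_L} of Proposition~\ref{prop:Lconc}), but an additional, non-trivial step is needed to transfer this to $Q$: the paper evaluates $\EE_\lambda{\rm Tr}\,\EE\langle Q({\cal L}-\EE\langle{\cal L}\rangle)\rangle$ and shows, via Gaussian integration by parts, the symmetrization identity \eqref{57}, the Nishimori identity, and crucially the thermal bounds \eqref{36} and \eqref{2terms_controlled} (needed because $\langle x_i\rangle\langle Q_{ll'}x_i^\intercal\rangle$ is not symmetric), that this quantity equals minus the total variance $\sum_{l\neq l'}\EE_\lambda\EE\langle(Q_{ll'}-\EE\langle Q_{ll'}\rangle)^2\rangle+\frac12\sum_l(\cdots)$ up to ${\cal O}((s_nn)^{-1/4})$; Cauchy--Schwarz against \eqref{total_L} then gives the $(s_n^4n)^{-1/6}$ rate. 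Your plan omits this Ghirlanda--Guerra-type step entirely, and without it the quenched fluctuations of $\langle Q\rangle$ are not reached.

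A secondary but real error is your justification of convexity: the perturbation Hamiltonian \eqref{added-pert} is \emph{not} affine in $\lambda$, because of the term $-x_i^\intercal\lambda^{1/2}Z_i$ and the nonlinearity of the principal square root. Consequently $F_n$ is not automatically concave/convex in $\lambda_{ll'}$: the second-derivative identity \eqref{second-derivative} contains the sign-indefinite term $\frac{1}{n^2}\sum_i\langle x_i\rangle^\intercal\frac{d^2\lambda^{1/2}}{d\lambda_{ll'}^2}Z_i$, and the paper must subtract the quadratic counterterm $\frac{\lambda_{ll'}^2}{2}CKSs_n^{-3/2}\frac1n\sum_{i,k'}|Z_{ik'}|$ in \eqref{new-free} to manufacture a concave function before Lemma~\ref{lemmaConvexity} can be applied; the $s_n^{-3/2}$ size of this counterterm is one of the inputs in the optimization that produces $\delta_n=s_n^{2/3}n^{-1/3}$ and the $(s_n^4n)^{-1/3}$ rate for ${\cal L}$ (the $1/6$ for $Q$ then comes from the Cauchy--Schwarz step above, not from the volume $s_n^{K(K+1)/2}$ of ${\cal D}_{n,K}$, which cancels in the averages $\EE_\lambda$).
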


Before entering the proof let us make a very last remark. There are problems with multiple overlaps. For example one may also consider the non-symmetric version of the tensor factorization problem. In this case $p$ matrices $X^{[p]}\in \mathbb{R}^{n_p\times K}$, with $n_p=\Theta(n)$ and with a possibly matrix-dependent prior $P_0^{[p]}$, are to be reconstructed from a data-tensor of the form
\begin{align*}
\widetilde Y_{i_1\ldots i_p}=n^{\frac{1-p}{2}}\,\sum_{k=1}^K X_{i_1 k}^{[1]}X_{i_2 k}^{[2]}\ldots X_{i_p k}^{[p]} + \widetilde Z_{i_1\ldots i_p}\,, \qquad 1\le i_1\le n_1, \, 1\le i_2\le n_2,\, \ldots, 1\le i_p\le n_p \,.
\end{align*}
In this case there is one overlap per matrix-signal to be inferred: $$Q^{[p']} \equiv \frac1{n_{p'}} \sum_{i=1}^{n_{p'}} X_{i}^{[p']}(x_i^{[p']})^\intercal\,, \qquad 1\le p'\le p\,.$$ It should be clear to the reader that all the setting described in this paper can be straightforwardly extended to include this case: one has to consider one perturbation channel of the form \eqref{pert_channel} per variable to be reconstructed (i.e., per matrix in the non-symmetric tensor factorization problem), each with its own independent matrix SNR: $$Y^{[p']} = X^{[p']}(\lambda_n^{[p']})^{1/2} +  Z^{[p']}\,, \qquad 1\le p'\le p\,.$$ Then the total Hamiltonian is the sum of the base one and the $p$ perturbation Hamiltonians, and so forth.
\section{Proof of concentration of the overlap matrix} \label{sec4}
For the sake of readibility we now drop the $n$ index in the matrix SNR: $$\lambda= \lambda_n\in {\cal D}_{K,n}\,.$$We use $l$, $l'$ and $k$, $k'$ for the variables dimension indices which are running from $1$ to $K$, and $i$, $j$ for the variables indices running from $1$ to $n$. When we write $l$ and $l'$ we always implicitly mean $l'\neq l$.

Let us start with some preliminary computations.
\subsection{Preliminaries: properties of the matrix ${\cal L}$}\label{sec:4.1}
The proof that the overlap concentrates relies on the concentration of another matrix defined as
\begin{align}
\mathcal{L}_{ll'} &\equiv \frac{1}{n}\frac{d{\cal H}_{\lambda}}{d\lambda_{ll'}}= \frac1{n}\sum_{i=1}^n\Big(x_{il} x_{il'} -x_{il} X_{il'}-
x_{il'} X_{il} -   x_{i}^\intercal \frac{d\lambda^{1/2}}{d\lambda_{ll'}}  Z_{i}\Big)\,,\label{66}\\
\mathcal{L}_{ll} &\equiv \frac{1}{n}\frac{d{\cal H}_{\lambda}}{d\lambda_{ll}}= \frac1{n}\sum_{i=1}^n\Big( \frac12x_{il}^2 -
x_{il} X_{il} -   x_{i}^\intercal \frac{d\lambda^{1/2}}{d\lambda_{ll}}  Z_{i}\Big)\,, \label{67}
\end{align}
where we used $$\Big(\frac{d\lambda}{d\lambda_{ll'}}\Big)_{kk'}=\delta_{kl}\delta_{k'l'}+\delta_{kl'}\delta_{k'l}\,,\qquad \Big(\frac{d\lambda}{d\lambda_{ll}}\Big)_{kk'}=\delta_{kl}\delta_{k'l}\,.$$
The fluctuations of this matrix are easier to control than the ones of the overlap. This comes from the fact that ${\cal L}$ is related to derivatives of the free energy, which is self-averaging by hypothesis \eqref{hyp:f_conc}. First consider $l'\neq l$. We have
\begin{align}\label{75}
\frac{df_n(\lambda)}{d\lambda_{ll'}}=\mathbb{E}\langle \mathcal{L}_{ll'}\rangle &= \frac1{n}\sum_{i=1}^n\mathbb{E}\Big\langle x_{il} x_{il'} - 
x_{il} X_{il'} - 
x_{il'} X_{il} -  x_i^\intercal \frac{d\lambda^{1/2}}{d\lambda_{ll'}}  Z_{i}\Big\rangle\nn
&\overset{\rm N}{=} \frac1n\sum_{i=1}^n\mathbb{E}\Big[ \langle x_{il} x_{il'}\rangle -2 
\langle x_{il}\rangle  \langle x_{il'} \rangle - \langle x_i\rangle^\intercal \frac{d\lambda^{1/2}}{d\lambda_{ll'}}  Z_{i}\Big]\,.
\end{align}
We used the Nishimori identity Lemma \ref{NishId} which in this case implies $$\EE\big[\langle x_{il} \rangle X_{il'}\big] = \EE\big[\langle x_{il}\rangle \langle x_{il'} \rangle\big]\,.$$ Each time we use an identity that is a consequence of Lemma \ref{NishId} we write a $\rm N$ on top of the equality (that stands for Nishimori). 
We integrate by part the Gaussian noise thanks to the formula $$\EE[Zg(Z)]=\EE \,g'(Z) \quad \text{for}  \quad Z\sim{\cal N}(0,1)$$ and any bounded function $g$. This leads to
\begin{align}
\EE\Big\langle x_i^\intercal \frac{d\lambda^{1/2}}{d\lambda_{ll'}}  Z_{i}\Big\rangle &=\sum_{k,k'=1}^K \EE\Big[\langle x_{ik}\rangle \Big(\frac{d\lambda^{1/2}}{d\lambda_{ll'}}\Big)_{kk'} Z_{ik'}\Big]= \sum_{k,k'=1}^K \EE\Big[\frac{d\langle  x_{ik}\rangle}{d Z_{ik'}} \Big(\frac{d\lambda^{1/2}}{d\lambda_{ll'}}\Big)_{kk'}\Big]\nn
&=\sum_{k,k'=1}^K \EE\Big[\Big(\langle x_{ik}(\lambda^{1/2}x_i)_{k'}\rangle - \langle x_{ik}\rangle (\lambda^{1/2}\langle x_i\rangle)_{k'} \Big) \Big(\frac{d\lambda^{1/2}}{d\lambda_{ll'}}\Big)_{kk'}\Big]\nn
&=\EE\Big[\Big\langle x_i^\intercal\frac{d\lambda^{1/2}}{d\lambda_{ll'}}  \lambda^{1/2} x_i\Big\rangle-\langle x_i\rangle^\intercal \frac{d\lambda^{1/2}}{d\lambda_{ll'}}\lambda^{1/2}\langle  x_i\rangle \Big]\,.
\label{76}
\end{align}
We used that the derivative of the Hamiltonian \eqref{added-pert} is
\begin{align}\label{77}
\frac{d{\cal H}_\lambda}{dZ_{ik}}=-(\lambda^{1/2}x_i)_k\,, \qquad \text{and thus} \qquad \frac{d\langle \cdot \rangle}{dZ_{ik}} = \langle \cdot \,(\lambda^{1/2}x_i)_k\rangle -  \langle \cdot \rangle (\lambda^{1/2}\langle x_i\rangle )_k\,.
\end{align}
We now exploit the symmetry of the matrices $x_ix_i^\intercal$ and $\langle x_i\rangle \langle x_i\rangle^\intercal$ in order to symmetrize the terms in \eqref{76} and then use the formula
\begin{align}\label{57}
 \lambda^{1/2}\frac{d\lambda^{1/2}}{d\lambda_{ll'}} + \frac{d\lambda^{1/2}}{d\lambda_{ll'}}\lambda^{1/2} = \frac{d\lambda}{d\lambda_{ll'}}\,.
\end{align}
Identity \eqref{76} then becomes
\begin{align}
\EE\Big\langle x_i^\intercal \frac{d\lambda^{1/2}}{d\lambda_{ll'}}  Z_{i}\Big\rangle &=\frac12 \EE\Big[\Big\langle x_i^\intercal\Big\{\frac{d\lambda^{1/2}}{d\lambda_{ll'}}  \lambda^{1/2}+\lambda^{1/2}\frac{d\lambda^{1/2}}{d\lambda_{ll'}}   \Big\}x_i\Big\rangle-\langle x_i\rangle ^\intercal\Big\{\frac{d\lambda^{1/2}}{d\lambda_{ll'}}  \lambda^{1/2}+\lambda^{1/2}\frac{d\lambda^{1/2}}{d\lambda_{ll'}}   \Big\}\langle x_i \rangle \Big]\nn
 &=\frac{1}{2}\EE\Big[\Big\langle x_i^\intercal\frac{d\lambda}{d\lambda_{ll'}}x_i\Big\rangle -\langle x_i\rangle^\intercal\frac{d\lambda}{d\lambda_{ll'}} \langle  x_i\rangle\Big]\nn
 &=\EE\big[\langle  x_{il}x_{il'}\rangle-\langle x_{il}\rangle\langle x_{il'}\rangle\big]\,.
\label{77_bis}
\end{align}
%
%
%
%
Similarly wo obtain for the diagonal terms $$\EE\Big\langle x_i^\intercal \frac{d\lambda^{1/2}}{d\lambda_{ll}}  Z_{i}\Big\rangle = \frac12\EE\big[\langle x_{il}^2\rangle-\langle x_{il}\rangle^2 \big]\,.$$
%
Using this \eqref{75} becomes 
\begin{align}
\frac{df_n(\lambda)}{d\lambda_{ll'}}=\mathbb{E}\langle \mathcal{L}_{ll'}\rangle	&=\mathbb{E}\langle \mathcal{L}_{l'l}\rangle=-\frac1{n}\EE\sum_{i=1}^n\langle x_{il}\rangle \langle x_{il'}\rangle\overset{\rm N}{=}-\frac1{n}\EE\sum_{i=1}^n X_{il}\langle x_{il'}\rangle = -\EE\langle Q_{ll'}\rangle=-\EE\langle Q_{l'l}\rangle\,,\label{80}\\
\frac{df_n(\lambda)}{d\lambda_{ll}}=\mathbb{E}\langle \mathcal{L}_{ll}\rangle	&=-\frac1{2n}\EE\sum_{i=1}^n\langle x_{il}\rangle^2\overset{\rm N}{=}-\frac1{2n}\EE\sum_{i=1}^n X_{il}\langle x_{il}\rangle = -\frac12\EE\langle Q_{ll}\rangle\,.\label{80_ll}
\end{align}
Therefore the expectation of ${\cal L}$ is directly related to the one of $Q$. It is thus natural to guess that if $\cal L$ concentrates onto its mean, the overlap should concentrate too. Indeed, the following concentration identity for ${\cal L}$ is key in proving Theorems~\ref{prop:thermalBOund} and \ref{thm:Q_con}. Note that the following proposition does not require the Nishimori identity (i.e., to be in the optimal Bayesian setting). But the Nishimori identity will be crucial when linking the fluctuations of $\cal L$ to those of $Q$.
\begin{proposition}[Concentration of ${\cal L}$]\label{prop:Lconc}
Let $(s_n)\in(0,1]^{\mathbb{N}}$ a positive sequence verifying $s_n\to 0_+$ and $s_nn\to+\infty$. There exists a positive constant $C(S,K)$ s.t.
\begin{align}
\EE_\lambda \mathbb{E} \big\langle \|\mathcal{L} - \langle \mathcal{L}\rangle\|_{\rm F}^2 \big\rangle  &\le \frac{C(S,K)}{s_nn}\,.\label{thermal_L}
\end{align}
Moreover if $s_n^4n\to+\infty$ and the free energy concentrates as in identity \eqref{hyp:f_conc}, then there exists a constant $C(C_f,K,S)$ s.t.
\begin{align}
\EE_\lambda \EE\big\langle \|{\cal L} -\EE\langle {\cal L}\rangle\|_{\rm F}^2 \big\rangle &\le  \frac{C(C_f,K,S)}{(s_n^{4}n)^{1/3}}\,.	\label{total_L}	
\end{align}
\end{proposition}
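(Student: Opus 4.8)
The plan is to prove Proposition~\ref{prop:Lconc} by the standard two-step strategy for overlap concentration in optimal Bayesian inference, adapted to the matrix setting. The thermal-fluctuation bound \eqref{thermal_L} and the total-fluctuation bound \eqref{total_L} are handled separately, with the key technical device being that $\mathcal{L}_{ll'}$ is (up to the $1/n$) a derivative of the perturbation Hamiltonian with respect to $\lambda_{ll'}$, so that its thermal variance equals a second derivative of the free energy, and integrating over $\lambda_n \in \mathcal{D}_{n,K}$ turns these derivatives into boundary terms.

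\textbf{Step 1: thermal fluctuations.} For fixed $\lambda_n$, one has $\langle \mathcal{L}_{ll'}\rangle = \frac1n \frac{d}{d\lambda_{ll'}}(-\ln \mathcal{Z}_n)$ up to sign conventions, and the connected correlation gives $\mathbb{E}\langle (\mathcal{L}_{ll'} - \langle \mathcal{L}_{ll'}\rangle)^2\rangle = -\frac{1}{n^2}\frac{d^2}{d\lambda_{ll'}^2}\mathbb{E}\ln \mathcal{Z}_n = \frac1n \frac{d}{d\lambda_{ll'}}\mathbb{E}\langle \mathcal{L}_{ll'}\rangle$ (one must be a little careful because $\lambda^{1/2}$ also depends on $\lambda_{ll'}$, but the extra terms are uniformly bounded by $C(S,K)$ using boundedness of the support and of $\|d\lambda^{1/2}/d\lambda_{ll'}\|$ on $\mathcal{D}_{n,K}$, where $\lambda_n$ stays away from singular matrices because it is strictly diagonally dominant). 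Then I integrate $\mathbb{E}_\lambda$: since $\mathbb{E}_\lambda[-] = \mathrm{Vol}(\mathcal{D}_{n,K})^{-1}\int_{\mathcal{D}_{n,K}} d\lambda_n[-]$ and $\mathrm{Vol}(\mathcal{D}_{n,K}) = s_n^{K(K+1)/2}$, integrating a derivative $\frac{d}{d\lambda_{ll'}}\mathbb{E}\langle\mathcal{L}_{ll'}\rangle$ over the box-like region $\mathcal{D}_{n,K}$ (each off-diagonal entry ranging over an interval of length $s_n$, each diagonal entry over an interval of length $s_n$) produces, by the fundamental theorem of calculus in that one variable, a difference of boundary values divided by one factor of $s_n$ relative to the other $K(K+1)/2-1$ directions. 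Because $\mathbb{E}\langle \mathcal{L}_{ll'}\rangle = -\mathbb{E}\langle Q_{ll'}\rangle$ is bounded by $S^2$ in absolute value (cf.\ the bound $\|\mathbb{E}\langle Q\rangle\|_{\rm F}\le S^2 K$ used in Lemma~\ref{lemma:same_f}), each such boundary term contributes $O(S^2/(n s_n))$ after the normalization by $1/n$ and by the volume. Summing over the $K(K+1)/2$ pairs $(l,l')$ and the diagonal terms gives \eqref{thermal_L}.

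\textbf{Step 2: total fluctuations.} Here I decompose $\mathbb{E}\langle \|\mathcal{L} - \mathbb{E}\langle\mathcal{L}\rangle\|_{\rm F}^2\rangle = \mathbb{E}\langle \|\mathcal{L} - \langle\mathcal{L}\rangle\|_{\rm F}^2\rangle + \mathbb{E}[\|\langle\mathcal{L}\rangle - \mathbb{E}\langle\mathcal{L}\rangle\|_{\rm F}^2]$; the first piece is controlled by \eqref{thermal_L}. For the second (``disorder'') piece, componentwise I need $\mathbb{E}[(\langle\mathcal{L}_{ll'}\rangle - \mathbb{E}\langle\mathcal{L}_{ll'}\rangle)^2]$. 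The trick is that $\langle\mathcal{L}_{ll'}\rangle$ is, up to a bounded correction, $\frac1n$ times a $\lambda$-derivative of the random variable $-\ln\mathcal{Z}_n$; writing $F_n(\lambda)\equiv -\frac1n\ln\mathcal{Z}_n$ and $f_n(\lambda) = \mathbb{E}F_n(\lambda)$, one has $\langle\mathcal{L}_{ll'}\rangle = \frac{dF_n}{d\lambda_{ll'}} + (\text{bounded})$ and $\mathbb{E}\langle\mathcal{L}_{ll'}\rangle = \frac{df_n}{d\lambda_{ll'}} + (\text{same bounded, in expectation})$. The standard convexity-plus-concentration lemma (as in \cite{BarbierM17a,barbier2017phase}) then says: the fluctuations of a derivative of a convex function that itself concentrates at rate $C_f/n$ are small, at the cost of a square-root and a small smoothing interval. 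Concretely, one uses that $\lambda_{ll'} \mapsto F_n$ has a definite-sign second derivative (up to bounded corrections) so that difference quotients over an interval of size $\delta$ sandwich the derivative; bounding $\mathbb{E}[(F_n - f_n)^2]\le C_f/n$ from \eqref{hyp:f_conc} and optimizing over $\delta$ gives $\mathbb{E}[(\langle\mathcal{L}_{ll'}\rangle - \mathbb{E}\langle\mathcal{L}_{ll'}\rangle)^2] \le C(C_f,K,S)(\text{something like } \delta + \frac{1}{n\delta^2})$, and then one must \emph{also} integrate over $\lambda_n$ and use the thermal bound of Step 1 for the second-derivative term; balancing yields a rate of the form $(s_n^{-?} n^{-1})^{1/3}$. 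Tracking the powers of $s_n$ carefully — each $\lambda$-integration or difference-quotient step that lands on $\mathcal{D}_{n,K}$ costs a power of $s_n$ because of the small box size — is what produces the stated exponent, giving $C(C_f,K,S)/(s_n^4 n)^{1/3}$, hence $s_n^4 n \to \infty$ suffices.

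\textbf{Main obstacle.} The genuinely delicate point is Step 2: one must set up the convexity argument for $F_n$ as a function of a single matrix entry $\lambda_{ll'}$ while the other entries vary, and control all the ``bounded correction'' terms coming from the $\lambda$-dependence of $\lambda^{1/2}$ and from the off-diagonal/diagonal asymmetry of the SNR matrix — in particular one needs $d\lambda^{1/2}/d\lambda_{ll'}$ and its derivatives to stay uniformly bounded on $\mathcal{D}_{n,K}$, which is exactly why the ensemble $\mathcal{D}_{n,K}$ of strictly diagonally dominant matrices was introduced (it keeps $\lambda_n$ uniformly bounded away from the boundary of $\mathcal{S}_K^+$, so $\lambda^{1/2}$ is smooth with controlled derivatives there). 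Getting the bookkeeping of the $s_n$ powers right through the nested $\lambda$-integration, the difference-quotient smoothing, and the application of \eqref{hyp:f_conc} is where the exponent $1/3$ and the requirement $s_n^4 n\to\infty$ come from, and is the part that needs care rather than cleverness.
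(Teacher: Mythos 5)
Your Step~1 is essentially the paper's own argument (the identity \eqref{directcomputation}, integration of the second derivative of $f_n$ over the interval of length $s_n$, boundary terms controlled via \eqref{80}--\eqref{80_ll}), and it goes through. The genuine gap lies in your treatment of the $\lambda$-dependence of $\lambda^{1/2}$, on which your Step~2 hinges. You assert that $d\lambda^{1/2}/d\lambda_{ll'}$ and its derivatives ``stay uniformly bounded on $\mathcal{D}_{n,K}$'' because the ensemble keeps $\lambda_n$ ``uniformly bounded away from the boundary of $\mathcal{S}_K^+$''. This is false: every entry of $\lambda_n\in\mathcal{D}_{n,K}$ is of order $s_n$, so $\lambda_n\to 0$, which lies \emph{on} the boundary of $\mathcal{S}_K^+$; writing $\lambda_n^{1/2}=\sqrt{s_n}\,\tilde\lambda^{1/2}$ one finds (equation \eqref{38}) that $d\lambda^{1/2}/d\lambda_{ll'}={\cal O}(s_n^{-1/2})$ and $d^2\lambda^{1/2}/d\lambda_{ll'}^2={\cal O}(s_n^{-3/2})$, i.e.\ they diverge. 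Diagonal dominance only serves to let the entries be varied independently while remaining in $\mathcal{S}_K^+$; it does not tame these derivatives. In Step~1 the misconception is harmless (the extra term $\frac1n\mathbb{E}\langle d\mathcal{L}_{ll'}/d\lambda_{ll'}\rangle$ is in fact ${\cal O}(1/(s_nn))$, which still yields \eqref{thermal_L}), but in Step~2 it is fatal to the argument as written.

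Indeed, the second $\lambda_{ll'}$-derivative of $F_n$ is not sign-definite ``up to bounded corrections'': by \eqref{second-derivative} it equals $-\langle(\mathcal{L}_{ll'}-\langle\mathcal{L}_{ll'}\rangle)^2\rangle$ minus the term $\frac{1}{n^2}\sum_i\langle x_i\rangle^\intercal\frac{d^2\lambda^{1/2}}{d\lambda_{ll'}^2}Z_i$, which is random, of indefinite sign, and carries the divergent $s_n^{-3/2}$ scale. To apply the concave-function lemma one must first restore concavity by subtracting the explicit noise-dependent quadratic counterterm $\frac{\lambda_{ll'}^2}{2}\,CKS\,s_n^{-3/2}\frac1n\sum_{i,k'}|Z_{ik'}|$ (the paper's $F_{n,ll'}$), and then separately control the fluctuations of this counterterm through $A_n=\frac1n\sum_{i,k'}(|Z_{ik'}|-\mathbb{E}|Z_{ik'}|)$ with $\mathbb{E}[A_n^2]\le K/n$. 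These contributions, with their $s_n^{-3/2}$ and $s_n^{-1/2}$ prefactors, are exactly what --- together with the $1/s_n$ from averaging over the small box and the crude bound $|C_\delta^\pm(\lambda_{ll'})|\le C(S)K^2s_n^{-1/2}$ --- lead to the choice $\delta_n=s_n^{2/3}n^{-1/3}$, the rate $(s_n^4n)^{-1/3}$ and the condition $s_n^4n\to+\infty$. Your proposal treats these corrections as deterministic bounded quantities and attributes the powers of $s_n$ solely to the box size, so the concavity required by the lemma is never actually established and the stated exponent is asserted rather than derived.
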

Let us assume this result at the moment and show how it implies concentration of $Q$. We will then prove Proposition~\ref{prop:Lconc} later in section~\ref{app:Lconc}.
\subsection{Thermal fluctuations: proof of Theorem~\ref{prop:thermalBOund}}
%

Let us start with the control of the fluctuations due to the posterior distribution. \\

\begin{proof}[Proof of \eqref{36} in Theorem~\ref{prop:thermalBOund}] 
By definition of the overlap we have
\begin{align}
\EE_\lambda\EE\big\langle (Q_{ll'}- \langle Q_{ll'}\rangle)^2 \big\rangle&=\EE_\lambda\EE\langle Q_{ll'}^2\rangle- \EE_\lambda\EE\big[\langle Q_{ll'}\rangle^2\big]\nn
&=\frac1{n^2}\sum_{i,j=1}^n\EE_\lambda\EE\big[ X_{il}X_{jl} (\langle x_{il'}x_{jl'}\rangle-\langle x_{il'}\rangle\langle x_{jl'}\rangle)\big]\label{toUse}\\
&\le\Big\{\frac1{n^2}\sum_{i,j=1}^n\EE\big[(X_{il}X_{jl})^2\big]\Big\}^{1/2} \Big\{\frac1{n^2}\sum_{i,j=1}^n\EE_\lambda\EE\big[(\langle x_{il'}x_{jl'}\rangle-\langle x_{il'}\rangle\langle x_{jl'}\rangle)^2\big] \Big\}^{1/2}\nonumber
\end{align}
using the Cauchy-Schwarz inequality. Note that the first term on the r.h.s. of this inequality is bounded by $C(S)$. We show next, using the Nishimori identity, that
\begin{align}
\frac{1}{2n^2}\sum_{i,j=1}^n \EE\big[(\langle x_{il}x_{jl}\rangle -\langle x_{il}\rangle \langle x_{jl}\rangle)^2\big]\le \mathbb{E}\big\langle (\mathcal{L}_{ll} - \langle \mathcal{L}_{ll}\rangle)^2\big\rangle +\frac{C(K,S)}{s_nn}\,\label{141}.
\end{align}
Thus we obtain, for large enough constants $C(K,S)$ and as $s_nn\to +\infty$,
\begin{align*}
\EE_\lambda\EE\big\langle \|Q- \langle Q\rangle\|_{\rm F}^2 \big\rangle \le C(K,S)\sum_{l}\Big\{\EE_\lambda\mathbb{E}\big\langle (\mathcal{L}_{ll} - \langle \mathcal{L}_{ll}\rangle)^2\big\rangle+\frac{C(K,S)}{s_nn}\Big\}^{1/2} \,.
\end{align*}
The concentration identity \eqref{thermal_L} in Proposition~\ref{prop:Lconc} then implies \eqref{36}.
%
%
%

It remains to prove the crucial identity \eqref{141}. Acting with the operator $\frac1n \frac{d}{d{\lambda_{ll}}}$ on both sides of \eqref{80_ll}, i.e., starting from the identity $$\frac1n \frac{d}{d{\lambda_{ll}}}\EE\langle {\cal L}_{ll}\rangle=-\frac{1}{2n} \frac{d}{d{\lambda_{ll}}}\EE\langle Q_{ll}\rangle$$ we obtain
\begin{align}
 &-\mathbb{E}\big\langle (\mathcal{L}_{ll} - \langle \mathcal{L}_{ll}\rangle)^2\big\rangle 
 + \frac{1}{n}\mathbb{E}\Big\langle \frac{d\mathcal{L}_{ll}}{d\lambda_{ll}} \Big\rangle=\frac1{2n}\sum_{i=1}^n\EE\big[X_{il}(\langle x_{il}{\cal L}_{ll} \rangle - \langle x_{il}\rangle \langle{\cal L}_{ll}\rangle)\big].
  \label{intermediate_}
\end{align}
Computing the derivative of $\mathcal{L}_{ll}$ and using $$-2\Big(\frac{d\lambda^{1/2}}{d\lambda_{ll}}\Big)^2=\lambda^{1/2}\frac{d^2\lambda^{1/2}}{d\lambda_{ll}^2}+\frac{d^2\lambda^{1/2}}{d\lambda_{ll}^2}\lambda^{1/2}$$ which follows from \eqref{57} we find, using \eqref{67} and similar computations as \eqref{76}--\eqref{77_bis}, that
\begin{align}
\Big|\frac{1}{n}\mathbb{E}\Big\langle \frac{d\mathcal{L}_{ll}}{d\lambda_{ll}} \Big\rangle\Big| &=\Big|\frac1{n^2}\sum_{i=1}^n\EE\Big[\Big\langle x_i^\intercal\Big(\frac{d\lambda^{1/2}}{d\lambda_{ll}}\Big)^2x_i\Big\rangle-\langle x_i\rangle^\intercal \Big(\frac{d\lambda^{1/2}}{d\lambda_{ll}}\Big)^2\langle x_i\rangle\Big]\Big|\nn
&\qquad\qquad\qquad\qquad\qquad\qquad\qquad\qquad\le \frac{2S^2}{n}\Big\|\Big(\frac{d\lambda^{1/2}}{d\lambda_{ll}}\Big)^2\Big\|_{\rm F} =  \frac{C(S,K)}{s_nn}\,.\label{bound_derL}
\end{align}
We used for the last step that the entries of the matrix $\big(\frac{d\lambda^{1/2}}{d\lambda_{ll}}\big)^2$ are ${\cal O}(s_n^{-1})$. Indeed recall that $$\lambda^{1/2}=\lambda_n^{1/2}\equiv \sqrt{s_n}\, \tilde\lambda^{1/2}$$ where $\tilde\lambda^{1/2}$ is independent of $n$. Therefore for any $(l,l')$
\begin{align}\label{38}
\frac{d\lambda^{1/2}}{d\lambda_{ll'}}= \sqrt{s_n} \,\frac{d{\tilde\lambda}^{1/2}}{d\tilde\lambda_{ll'}} \frac{d\tilde \lambda_{ll'}}{d\lambda_{ll'}}={\cal O}(s_n^{-1/2})	
\end{align}
where, by a slight abuse of notation, we mean here that each element of this matrix is ${\cal O}(s_n^{-1/2})$. Let us compute the following term appearing in \eqref{intermediate_}:
\begin{align}
\frac{1}{2n}\sum_{i=1}^n\EE\big[X_{il}(\langle x_{il}{\cal L}_{ll} \rangle - \langle x_{il}\rangle &\langle{\cal L}_{ll}\rangle)\big]=\frac{1}{2n^2}\sum_{i,j=1}^n \EE\Big[\frac12X_{il}\langle x_{il}x_{jl}^2 \rangle - X_{il}X_{jl}\langle x_{il}x_{jl}\rangle- X_{il} \Big\langle x_{il}  x_{j}^\intercal \frac{d\lambda^{1/2}}{d\lambda_{ll}}Z_{j} \Big\rangle\nn 
&- \frac12X_{il}\langle x_{il}\rangle \langle x_{jl}^2 \rangle + X_{il}X_{jl}\langle x_{il}\rangle \langle x_{jl}\rangle+ X_{il}\langle x_{il}\rangle \Big\langle {x}_{j}^\intercal \frac{d\lambda^{1/2}}{d\lambda_{ll}} Z_{j}\Big\rangle \Big]\,.\label{46_th}
\end{align}
We need to simplify $$T\equiv \EE\Big[X_{il}\langle x_{il}\rangle\Big\langle {x}_{j}^\intercal \frac{d\lambda^{1/2}}{d\lambda_{ll}} \, {Z}_{j}\Big\rangle - X_{il}\Big\langle x_{il}{x}_{j}^\intercal \frac{d\lambda^{1/2}}{d\lambda_{ll}}  {Z}_{j}\Big\rangle\Big]\,.$$ Using similar manipulations as for obtaining \eqref{76}--\eqref{77_bis}, i.e., by symmetrizing when possible in order to use \eqref{57}, we simplify the first term in $T$:
\begin{align*}
 &\EE\Big[X_{il}\langle x_{il}\rangle\Big\langle {x}_{j}^\intercal \frac{d\lambda^{1/2}}{d\lambda_{ll}}  {Z}_{j}\Big\rangle\Big]\nn
 &\qquad=\EE\Big[X_{il}\langle  x_{j}\rangle^\intercal\frac{d\lambda^{1/2}}{d\lambda_{ll}}\lambda^{1/2} \langle x_j x_{il}\rangle - 2X_{il}\langle x_{il}\rangle \langle x_j\rangle^\intercal\frac{d\lambda^{1/2}}{d\lambda_{ll}}\lambda^{1/2}\langle x_j\rangle+ X_{il}\langle x_{il}\rangle \Big\langle x_j^\intercal \frac{d\lambda^{1/2}}{d\lambda_{ll}} \lambda^{1/2}x_j\Big\rangle \Big]\nn
 &\qquad=\EE\Big[X_{il}\langle  x_{j}\rangle^\intercal\frac{d\lambda^{1/2}}{d\lambda_{ll}}\lambda^{1/2} \langle x_j x_{il}\rangle - X_{il}\langle x_{il}\rangle \langle x_{jl}\rangle^2+ \frac12X_{il}\langle x_{il}\rangle \langle x_{jl}^2\rangle \Big]
 \,.
\end{align*}
Similarly the second term in $T$ is
\begin{align*}
 -\EE\Big[X_{il}\Big\langle x_{il} {x}_{j}^\intercal \frac{d\lambda^{1/2}}{d\lambda_{ll}}  {Z}_{j}\Big\rangle\Big]&=-\EE\Big[X_{il}\Big\langle  x_{j}^\intercal\frac{d\lambda^{1/2}}{d\lambda_{ll}}\lambda^{1/2} x_j x_{il}\Big\rangle - X_{il}\langle x_{il} x_j^\intercal\rangle\frac{d\lambda^{1/2}}{d\lambda_{ll}}\lambda^{1/2}\langle x_j\rangle \Big]\nn
 &=-\EE\Big[\frac12X_{il}\langle  x_{jl}^2x_{il}\rangle - X_{il}\langle x_{il} x_j^\intercal\rangle\frac{d\lambda^{1/2}}{d\lambda_{ll}}\lambda^{1/2}\langle x_j\rangle \Big]\,.
\end{align*}
Therefore, using again \eqref{57}, $T$ is the equal to
\begin{align*}
 T&=\EE\Big[X_{il}\langle  x_{j}\rangle^\intercal\Big\{\frac{d\lambda^{1/2}}{d\lambda_{ll}}\lambda^{1/2}+\lambda^{1/2}\frac{d\lambda^{1/2}}{d\lambda_{ll}}\Big\} \langle x_j x_{il}\rangle - X_{il}\langle x_{il}\rangle \langle x_{jl}\rangle^2+ \frac12X_{il}\langle x_{il}\rangle \langle x_{jl}^2\rangle -\frac12X_{il}\langle  x_{jl}^2x_{il}\rangle \Big]\nn
 &=\EE\Big[X_{il}\langle  x_{jl}\rangle \langle x_{jl} x_{il}\rangle - X_{il}\langle x_{il}\rangle \langle x_{jl}\rangle^2+ \frac12X_{il}\langle x_{il}\rangle \langle x_{jl}^2\rangle -\frac12X_{il}\langle  x_{jl}^2x_{il}\rangle \Big]\,.
\end{align*}
Plugging this expression in \eqref{46_th} and then simplifying using the Nishimori identity we obtain
\begin{align*}
\frac{1}{2n}\sum_{i=1}^n\EE\big[X_{il}(\langle x_{il}{\cal L}_{ll} \rangle - \langle x_{il}\rangle \langle{\cal L}_{ll}\rangle)\big]=-\frac{1}{2n^2}\sum_{i,j=1}^n  \EE\big[(\langle x_{il}x_{jl}\rangle -\langle x_{il}\rangle\langle x_{jl}\rangle)^2\big]\,.
\end{align*}
Together with \eqref{bound_derL} and \eqref{intermediate_} it ends the proof of \eqref{141}, and therefore of \eqref{36} too. 
\end{proof}

\begin{proof}[Proof of \eqref{2terms_controlled} in Theorem~\ref{prop:thermalBOund}] We denote the overlap between the replica $x=x^{(1)}$ and the ground-truth signal $X=x^{(0)}$ equivalently as $$Q=Q^{(01)}\equiv\frac1n \sum_{i=1}^n X_i(x_i^{(1)})^\intercal\,,$$ and recall definition \eqref{Q12} for the overlap between two replicas $x^{(1)}$ and $x^{(2)}$. Recall also that we still use the bracket notation $\langle -\rangle$ for the expectation w.r.t. $P(x^{(1)}|\widetilde Y,Y,\theta,\lambda_n)P(x^{(2)}|\widetilde Y,Y,\theta,\lambda_n)$, the product posterior measure acting of the conditionally independent replicas. 

The proof relies on the following relation which is a simple consequence of the Nishimori identity:
\begin{align}
\EE\big\langle \|Q^{(01)}-\langle Q^{(12)}\rangle\|_{\rm F}^2	\big\rangle &\overset{\rm N}{=}\EE\big\langle \|Q^{(12)}-\langle Q^{(12)}\rangle\|_{\rm F}^2 \big\rangle \label{Leo_0}\\
&=\EE\big\langle \|Q^{(12)}\|_{\rm F}^2\big\rangle -\EE\big[\|\langle Q^{(12)}\rangle\|_{\rm F}^2\big] \nn
&\overset{\rm N}{=}\EE\big\langle \|Q^{(01)}\|_{\rm F}^2\big\rangle -\EE\big[\|\langle Q^{(12)}\rangle\|_{\rm F}^2\big] \nn
&\hspace{-1cm}=\big(\EE\big\langle \|Q^{(01)}\|_{\rm F}^2\big\rangle -\EE\big[\|\langle Q^{(01)}\rangle\|_{\rm F}^2\big]\big)  +\big(\EE\big[\|\langle Q^{(01)}\rangle\|_{\rm F}^2\big]-\EE\big[\|\langle Q^{(12)}\rangle\|_{\rm F}^2\big]\big)\,.\label{Leo}
\end{align}
The first fluctuations in \eqref{Leo}, once averaged over $\lambda$, are controlled by \eqref{36} that we have just proven. The second fluctuations are controlled as follows:
\begin{align*}
	\EE_\lambda\EE\big[\langle Q_{ll'}^{(01)}\rangle^2-\langle Q_{ll'}^{(12)}\rangle^2\big]&=\frac1{n^2}\sum_{i,j=1}^n\EE_\lambda\EE\big[ X_{il}\langle x_{il'}\rangle X_{jl} \langle x_{jl'}\rangle - \langle x_{il}\rangle\langle x_{il'}\rangle\langle x_{jl}\rangle\langle x_{jl'}\rangle\big]\nn
	&\overset{\rm N}{=}\frac1{n^2}\sum_{i,j=1}^n\EE_\lambda\EE\big[ \langle x_{il'}\rangle \langle x_{jl'}\rangle(\langle x_{il}x_{jl}\rangle - \langle x_{il}\rangle\langle x_{jl}\rangle)\big]\,.
\end{align*}
We recognize a similar form as \eqref{toUse}. The derivation is then the same as the one of \eqref{36} based on \eqref{141} and yields
\begin{align*}
\EE_\lambda\Big[\EE\big[\|\langle Q^{(01)}\rangle\|_{\rm F}^2\big]-\EE\big[\|\langle Q^{(12)}\rangle\|_{\rm F}^2\big]\Big]&\le \frac{C(K,S)}{\sqrt{s_nn}} \,.	
\end{align*}
This ends the proof of \eqref{2terms_controlled}, and thus of Proposition~\ref{prop:thermalBOund}.
\end{proof}
\subsection{Total fluctuations: proof of Theorem~\ref{thm:Q_con}}
Now that we control the thermal fluctuations we are in position to prove our second main Theorem~\ref{thm:Q_con}. It shows that if the free energy concentrates then the overlap not only concentrates w.r.t. the posterior distribution, but also w.r.t. the quenched variables. The spirit of the proof is similar to the derivation of the Ghirlanda-Guerra identities in the context of spin glasses\footnote{What we mean here is that, as for the proof of the Ghirlanda-Guerra identities, the present method is based on testing the overlap matrix $Q$ against the fluctuations of the derivative of the Hamiltonian (the ${\cal L}$ matrix given by \eqref{66}, \eqref{67}). In the context of inference, related to the Nishimori line in spin glasses \cite{nishimori2001statistical,contucci2009spin}, the derived identities really are a special case of the Ghirlanda-Guerra identities.} \cite{GhirlandaGuerra:1998}. Our goal here is to compute 
\begin{align}
\EE_\lambda{\rm Tr}\,\EE \big\langle Q({\cal L} -\EE\langle {\cal L}\rangle)\big \rangle &=\EE_\lambda{\rm Tr}\,\EE \langle Q{\cal L}\rangle + \EE_\lambda\big[\| \EE\langle Q\rangle\|_{\rm F}^2\big] - \frac12 \sum_{l=1}^K\EE_\lambda\big[\EE[\langle Q_{ll}\rangle]^2\big]\,, \label{103easy}
\end{align}
using that $$\EE\langle {\cal L}\rangle = \frac12 {\rm diag}(\EE\langle Q\rangle)-\EE\langle Q\rangle$$ because of \eqref{80}, \eqref{80_ll}, and that $\EE \langle Q\rangle$ is symmetric. The crux of the proof is that from the quantity $\EE_\lambda{\rm Tr}\,\EE \langle Q({\cal L} -\EE\langle {\cal L}\rangle)\rangle$ will appear the fluctuations of the overlap, and this quantity is small by Proposition~\ref{prop:Lconc} (with Cauchy-Schwarz).

It thus only remains to compute $\EE_\lambda{\rm Tr}\,\EE \langle Q{\cal L}\rangle$. Let us first consider the off-diagonal terms:
\begin{align}
\sum_{l\neq l'} \EE_\lambda\EE \langle Q_{ll'}{\cal L}_{l'l}\rangle &=\sum_{l\neq l'} \EE_\lambda\EE\Big[ \frac{1}{n}\sum_{i=1}^n \langle Q_{ll'} x_{il} x_{il'}\rangle -\langle Q_{ll'}Q_{l'l}\rangle -\langle Q_{ll'}^2\rangle -   \frac{1}{n}\sum_{i=1}^n\Big\langle Q_{ll'} x_{i}^\intercal \frac{d\lambda^{1/2}}{d\lambda_{ll'}} \, Z_{i}\Big\rangle\Big]\,.\label{32}
\end{align}
We need to simplify the last term. Using \eqref{77},
\begin{align}
\EE_\lambda\EE\Big\langle Q_{ll'} x_{i}^\intercal \frac{d\lambda^{1/2}}{d\lambda_{ll'}} \, Z_{i}\Big\rangle &=	\EE_\lambda\EE\Big[\Big\langle Q_{ll'}x_i^\intercal\frac{d\lambda^{1/2}}{d\lambda_{ll'}} \lambda^{1/2} x_i  \Big\rangle - \langle Q_{ll'}x_i^\intercal\rangle\frac{d\lambda^{1/2}}{d\lambda_{ll'}} \lambda^{1/2} \langle x_i\rangle \Big]\nn
&=\EE_\lambda\EE\Big[\langle Q_{ll'}x_{il}x_{il'}  \rangle - \langle Q_{ll'}x_i^\intercal\rangle\frac{d\lambda^{1/2}}{d\lambda_{ll'}} \lambda^{1/2} \langle x_i\rangle \Big]\,.\label{33}
\end{align}
The first term of the r.h.s. of \eqref{33} has been simplified because $x_ix_i^\intercal$ is symmetric, allowing the symmetrization of $\frac{d\lambda^{1/2}}{d\lambda_{ll'}}$ followed by the use of \eqref{57}. In contrast the second term above lacks symmetry as the matrix $\langle x_i\rangle \langle Q_{ll'} x_i^\intercal\rangle$ is not symmetric. This prevents the use of the mechanism employed for the first term. In order to face this difficulty we exploit the concentration of the overlap w.r.t. the posterior that has been shown previously. We can write
\begin{align*}
\Big|\EE_\lambda\EE\Big[\langle Q_{ll'}x_i^\intercal\rangle\frac{d\lambda^{1/2}}{d\lambda_{ll'}} \lambda^{1/2} \langle x_i\rangle \Big]	- \EE_\lambda\EE\Big[\langle Q_{ll'}\rangle \langle x_i\rangle^\intercal\frac{d\lambda^{1/2}}{d\lambda_{ll'}} \lambda^{1/2} \langle x_i\rangle \Big]\Big| \le  \frac{C(K,S)}{(s_nn)^{1/4}}
\end{align*}
by relation \eqref{36} in Proposition~\ref{prop:thermalBOund} (which relies on the Nishimori identity) and Cauchy-Schwarz, because the entries of the matrix $\frac{d\lambda^{1/2}}{d\lambda_{ll'}} \lambda^{1/2}$ are bounded (recall that $\lambda^{1/2}\equiv \sqrt{s_n}\, \tilde\lambda^{1/2}$ where $\tilde\lambda^{1/2}$ is independent of $n$ with bounded entries, and \eqref{38}) as well as the support of the prior (so $|\langle x_{ik}\rangle|\le S$). Now we can exploit symmetry and therefore write
\begin{align*}
\EE_\lambda\EE\Big[\langle Q_{ll'}\rangle \langle x_i\rangle^\intercal\frac{d\lambda^{1/2}}{d\lambda_{ll'}} \lambda^{1/2} \langle x_i\rangle \Big]&=\frac12\EE_\lambda\EE\Big[\langle Q_{ll'}\rangle \langle x_i\rangle^\intercal\Big\{\frac{d\lambda^{1/2}}{d\lambda_{ll'}} \lambda^{1/2}+\lambda^{1/2}\frac{d\lambda^{1/2}}{d\lambda_{ll'}} \Big\} \langle x_i\rangle \Big]	\nn
&=\frac12\EE_\lambda\EE\Big[\langle Q_{ll'}\rangle \langle x_i\rangle^\intercal\frac{d\lambda}{d\lambda_{ll'}} \langle x_i\rangle \Big]=\EE_\lambda\EE\big[\langle Q_{ll'}\rangle \langle x_{il}\rangle\langle x_{il'}\rangle \big]\,.
\end{align*}
Combining everything in \eqref{33} and \eqref{32} yields
\begin{align*}
\sum_{l\neq l'} \EE_\lambda\EE \langle Q_{ll'}{\cal L}_{l'l}\rangle &=\sum_{l\neq l'} \EE_\lambda\EE\Big[-\langle Q_{ll'}Q_{l'l}\rangle -\langle Q_{ll'}^2\rangle +\langle Q_{ll'}\rangle\langle Q_{ll'}^{(12)}\rangle \Big]+{\cal O}_{K,S}((s_nn)^{-1/4})
\end{align*}
using 
\begin{align*}
Q^{(12)}_{ll'}\equiv \frac1n \sum_{i=1}^n x_{il}^{(1)}x_{il'}^{(2)}\quad  \text{and thus} \quad \langle Q^{(12)}_{ll'}\rangle = \frac1n \sum_{i=1}^n \langle x_{il}\rangle \langle x_{il'}\rangle\,,	
\end{align*}
the latter matrix being symmetric. The notation $A=B+{\cal O}_{K,S}((s_nn)^{-1/4})$ we introduced means simply that $|A-B|\le C(K,S)(s_nn)^{-1/4}$. We now consider the diagonal terms. Similarly
\begin{align*}
\sum_{l} \EE_\lambda\EE \langle Q_{ll}{\cal L}_{ll}\rangle &= \sum_{l} \EE_\lambda\EE\Big[ \frac{1}{2n}\sum_{i=1}^n  \langle Q_{ll}x_{il}^2\rangle -\langle Q_{ll}^2\rangle  - \frac{1}{n}\sum_{i=1}^n \Big \langle Q_{ll}x_{i}^\intercal \frac{d\lambda^{1/2}}{d\lambda_{ll}} \, {Z}_{i}  \Big\rangle\Big]\nonumber\\
&=\sum_{l}\EE_\lambda\EE\Big[   -\langle Q_{ll}^2\rangle +\frac12\langle Q_{ll}\rangle \langle Q_{ll}^{(12)}\rangle\Big]+{\cal O}_{K,S}((s_nn)^{-1/4})\,.
\end{align*}
Summing everything we obtain (note that we combine all the on and off-diagonal terms computed above inside a single double sum $\sum_{l,l'}$, and we therefore need to remove the diagonal terms counted twice)
\begin{align*}
\EE_\lambda{\rm Tr}\,\EE \langle Q{\cal L}\rangle&=\sum_{l,l'}\EE_\lambda\EE\Big[\langle Q_{ll'}\rangle \langle Q_{ll'}^{(12)}\rangle-\langle Q_{ll'}Q_{l'l}\rangle -\langle Q_{ll'}^2\rangle \Big]\nn
&\qquad\qquad\qquad+\sum_{l}\EE_\lambda\EE\Big[\langle Q_{ll}^2\rangle - \frac12 \langle Q_{ll}\rangle\langle Q_{ll}^{(12)}\rangle\Big]+{\cal O}_{K,S}((s_nn)^{-1/4})\nonumber\\
&= \EE_\lambda{\rm Tr}\,\EE\big[\langle Q\rangle\langle Q^{(12)}\rangle\big]-\EE_\lambda{\rm Tr}\,\EE\langle Q^2\rangle -\EE_\lambda\EE\big\langle\| Q \|_{\rm F}^2\big\rangle \nn
&\qquad\qquad\qquad+ \sum_{l}\EE_\lambda\EE\Big[\langle Q_{ll}^2\rangle - \frac12 \langle Q_{ll}\rangle\langle Q_{ll}^{(12)}\rangle\Big]+{\cal O}_{K,S}((s_nn)^{-1/4})\,. 
\end{align*}
Therefore, plugging this in \eqref{103easy} leads to
\begin{align*}
\EE_\lambda{\rm Tr}\,\EE \big\langle &Q({\cal L} -\EE\langle {\cal L}\rangle) \big\rangle	= -\EE_\lambda\EE\big\langle\| Q \|_{\rm F}^2\big\rangle+ \EE_\lambda\big[\| \EE\langle Q\rangle\|_{\rm F}^2\big]+ \frac12\sum_{l}\EE_\lambda\big[\EE\langle Q_{ll}^2\rangle-\EE[\langle Q_{ll}\rangle]^2\big] \nn
& +\EE_\lambda{\rm Tr}\,\EE\big\langle Q(\langle Q^{(12)}\rangle -Q)\big\rangle+  \frac12\sum_{l}\EE_\lambda\EE\big\langle Q_{ll}(Q_{ll}- \langle Q_{ll}^{(12)}\rangle)\big\rangle+{\cal O}_{K,S}((s_nn)^{-1/4})\,.
\end{align*}	
A direct application of \eqref{2terms_controlled} in Proposition~\ref{prop:thermalBOund} together with Cauchy-Schwarz gives:
\begin{align*}
\big|\EE_\lambda{\rm Tr}\,\EE\big\langle Q(\langle Q^{(12)}\rangle -Q)\big\rangle\big|\le \Big\{\EE_\lambda\EE\big\langle \|Q\|_{\rm F}^{2}\big\rangle\, \EE_\lambda\EE\big\langle \| Q-\langle Q^{(12)}\rangle\|_{\rm F}^2\big\rangle\Big\}^{1/2}	\le \frac{C(K,S)}{(s_nn)^{1/4}}
\end{align*}
as the overlap norm is bounded by $KS^2$. Similarly $$\big|\EE_\lambda\EE\big\langle Q_{ll}(Q_{ll}- \langle Q_{ll}^{(12)}\rangle)\big\rangle\big| \le \frac{C(K,S)}{(s_nn)^{1/4}}\,.$$ Therefore
\begin{align}
\EE_\lambda{\rm Tr}\,\EE \big\langle Q({\cal L} -\EE\langle {\cal L}\rangle) \big\rangle	&= -\sum_{l\neq l'}\EE_\lambda\EE\big\langle (Q_{ll'}-\EE\langle Q_{ll'}\rangle)^2\big\rangle - \frac12\sum_{l}\EE_\lambda\EE\big\langle (Q_{ll}-\EE\langle Q_{ll}\rangle)^2\big\rangle \nn
&\qquad\qquad\qquad\qquad+{\cal O}_{K,S}((s_nn)^{-1/4})\,.\label{47}
\end{align}
Finally by Cauchy-Schwarz and Proposition~\ref{prop:Lconc} we can write
\begin{align*}
 \big|\EE_\lambda{\rm Tr}\,\EE \big\langle Q({\cal L} -\EE\langle {\cal L}\rangle) \big\rangle\big| &\le 	C(S) \Big\{\EE_\lambda\EE\big\langle \|{\cal L} -\EE\langle {\cal L}\rangle\|_{\rm F}^2\big\rangle\Big\}^{1/2} \le \frac{C(C_f,K,S)}{(s_n^{4}n)^{1/6}}\,.
\end{align*}
 This inequality combined with \eqref{47} gives
 \begin{align*}
 \sum_{l\neq l'}\EE_\lambda\EE\big\langle (Q_{ll'}-\EE\langle Q_{ll'}\rangle)^2\big\rangle + \frac12\sum_{l}\EE_\lambda\EE\big\langle (Q_{ll}-\EE\langle Q_{ll}\rangle)^2\big\rangle\le \frac{C(C_f,K,S)}{(s_n^{4}n)^{1/6}}	
 \end{align*}
 and thus the final result Theorem~\ref{thm:Q_con}. \qed
\section{Concentration of the matrix ${\cal L}$}\label{app:Lconc}
The goal of this section is to prove Proposition~\ref{prop:Lconc}. The proof is broken in two parts using the decomposition
\begin{align}
\mathbb{E}\big\langle \|\mathcal{L} - \mathbb{E}\langle \mathcal{L}\rangle\|^2_{\rm F}\big\rangle
& = 
\mathbb{E}\big\langle \|\mathcal{L} - \langle \mathcal{L}\rangle\|^2_{\rm F}\big\rangle
+ 
\mathbb{E}\big[\|\langle \mathcal{L}\rangle - \mathbb{E}\langle \mathcal{L}\rangle\|^2_{\rm F}\big]\,. \label{L_decomp}
\end{align}
\subsection{Thermal fluctuations}
The first result, which is relation \eqref{thermal_L} in Proposition~\ref{prop:Lconc}, expresses concentration w.r.t.\ the posterior distribution. It follows from concavity properties of the average free energy. Recall the notation $\EE_\lambda[-]\equiv s_n^{-K(K+1)/2}\int_{{\cal D}_{n,K}} d\lambda\,[-]$.\\

\begin{proof}[Proof of \eqref{thermal_L} in Proposition~\ref{prop:Lconc}]
By direct computation we have for any $(l,l')\in\{1,\ldots,K\}^2$
\begin{align}
\mathbb{E}\big\langle (\mathcal{L}_{ll'} - \langle \mathcal{L}_{ll'} \rangle)^2\big\rangle
& = 
-\frac{1}{n}\frac{d^2f_n}{d\lambda_{ll'}^2}
+\frac{1}{n} \mathbb{E}\Big\langle\frac{d{\cal L}_{ll'}}{d\lambda_{ll'}} \Big\rangle \,.
\label{directcomputation}
\end{align}
We have shown in \eqref{bound_derL} that $$\Big|\frac{1}{n}\mathbb{E}\Big\langle \frac{d\mathcal{L}_{ll'}}{d\lambda_{ll'}} \Big\rangle\Big| \le \frac{C(S,K)}{s_nn}\,.$$
We integrate the equality \eqref{directcomputation} over 
\begin{align}\label{integrationdomain}
\lambda_{ll'}\in (a_n,b_n)=(s_n,2s_n)\ \ \text{if}\ \ l\neq l'\,, \quad \text{or}\quad  \lambda_{ll}\in (a_n,b_n)=(2Ks_n, (2K+1)s_n) \ \ \text{else}\,.	
\end{align}
We obtain
\begin{align*}
\int_{a_n}^{b_n} d\lambda_{ll'}\, \mathbb{E}\big\langle (\mathcal{L}_{ll'} - \langle \mathcal{L}_{ll'} \rangle)^2\big\rangle
& \leq 
- \frac{1}{n}\int_{a_n}^{b_n} d\lambda_{ll'} \,\frac{d^2f_{n}}{d\lambda_{ll'}^2} + \frac{C(S,K)}{n}
\nonumber \\ &
=
\frac{1}{n}\Big(\frac{df_{n}}{d\lambda_{ll'}}(\lambda_{ll'}=a_n) 
- \frac{df_{n}}{d\lambda_{ll'}}(\lambda_{ll'}=b_n)\Big)+\frac{C(S,K)}{n}\,.
\end{align*}
We have $|d f_{n}/d\lambda_{ll'}| \le |\EE[\langle x_{1l}\rangle \langle x_{1l'}\rangle]|\le S^2$ from \eqref{80}, \eqref{80_ll} so the first term is certainly smaller in absolute value than $2S^2/n$. Therefore
\begin{align*}
\frac1{s_n}\int_{a_n}^{b_n} d\lambda_{ll'}\, \mathbb{E}\big\langle (\mathcal{L}_{ll'} - \langle \mathcal{L}_{ll'} \rangle)^2\big\rangle \le \frac{2S^2}{s_nn}+\frac{C(S,K)}{s_nn}=\frac{C(S,K)}{s_nn}\,.	
\end{align*}
In the last equality, the constants $C(S,K)$ are different. We now average this inequality w.r.t. to the remaining entries of $\lambda\in{\cal D}_{n,K}$, where the set ${\cal D}_{n,K}$ is defined by \eqref{DnK}. This yields 
\begin{align*}
\EE_\lambda \mathbb{E}\big\langle (\mathcal{L}_{ll'} - \langle \mathcal{L}_{ll'} \rangle)^2\big\rangle\le \frac{C(S,K)}{s_nn}\,.	
\end{align*}
Summing all $K^2$ fluctuations for the various couples $(l,l')$ yields the desired bound. 
\end{proof}
\subsection{Quenched fluctuations}
The next proposition expresses concentration w.r.t.\ the quenched variables
and is a consequence of the concentration of the free energy onto its average (w.r.t. the quenched variables). This is where the hypothesis \eqref{hyp:f_conc} is crucial. This proposition together with \eqref{thermal_L} and relation \eqref{L_decomp} imply \eqref{total_L} in Proposition~\ref{prop:Lconc}.
\begin{proposition}[Quenched fluctuations of $\cal L$]\label{disorder-fluctuations}
	Let $(s_n)\in(0,1]^{\mathbb{N}}$ a positive sequence verifying $s_n\to 0_+$ and $s_n^4n\to+\infty$. Assume that the free energy concentrates as in identity \eqref{hyp:f_conc}. Then there exists $C(C_f,K,S)>0$ s.t.
	\begin{align*}
		\EE_\lambda \mathbb{E} \big[\|\langle \mathcal{L}\rangle-\EE\langle \mathcal{L}\rangle \|_{\rm F}^2 \big]  \le \frac{C(C_f,K,S)}{(s_n^{4}n)^{1/3}}\,.
	\end{align*}
\end{proposition}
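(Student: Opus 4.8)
\emph{Reduction to gradient fluctuations of the free energy.} Write $F_n(\lambda)\equiv-\frac1n\ln{\cal Z}_n(\widetilde Y,Y,\theta,\lambda)$ for the random free energy, so that $f_n(\lambda)=\EE F_n(\lambda)$. Differentiating ${\cal Z}_n$ with respect to an entry of $\lambda$ and recalling \eqref{66}, \eqref{67} gives $\partial_{\lambda_{ll'}}F_n=\langle{\cal L}_{ll'}\rangle$ and hence $\partial_{\lambda_{ll'}}f_n=\EE\langle{\cal L}_{ll'}\rangle$; consequently $\langle{\cal L}_{ll'}\rangle-\EE\langle{\cal L}_{ll'}\rangle=\partial_{\lambda_{ll'}}(F_n-f_n)$, so the Proposition is a statement about the fluctuations of the gradient of the free energy about its mean. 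Together with the thermal bound \eqref{thermal_L} and the decomposition \eqref{L_decomp}, it is what will produce \eqref{total_L}. The plan is, for each pair $(l,l')$, to run a convexity-type comparison in the single variable $\lambda_{ll'}$ (the other entries of $\lambda\in{\cal D}_{n,K}$ frozen), bounding the fluctuation of $\partial_{\lambda_{ll'}}F_n$ by the fluctuation of $F_n$ itself plus controllable error terms; then to integrate over ${\cal D}_{n,K}$, invoking hypothesis \eqref{hyp:f_conc}, the already available bound \eqref{thermal_L}, and the a priori bound $|\partial_{\lambda_{ll'}}f_n|\le S^2$ from \eqref{80}, \eqref{80_ll}; and finally to optimise over the size of the comparison shift.

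\emph{The comparison step.} The covariance computation behind \eqref{directcomputation} reads $\partial^2_{\lambda_{ll'}}F_n=\langle\frac{d{\cal L}_{ll'}}{d\lambda_{ll'}}\rangle-n\langle({\cal L}_{ll'}-\langle{\cal L}_{ll'}\rangle)^2\rangle$. The second term is $\le0$: this is the non-negative ``variance'' contribution, which after averaging makes $f_n$ \emph{approximately concave} in $\lambda_{ll'}$. The first term is the delicate one: because the perturbation channel \eqref{pert_channel} involves $\lambda_n^{1/2}$ and not $\lambda_n$ linearly (unlike the scalar-overlap setting), $\frac{d{\cal L}_{ll'}}{d\lambda_{ll'}}$ contains $\partial^2_{\lambda_{ll'}}\lambda_n^{1/2}={\cal O}(s_n^{-3/2})$ times the Gaussian noise, so it is neither deterministically small nor sign-definite — but, exactly as in \eqref{bound_derL}, Gaussian integration by parts with \eqref{57} yields the deterministic bound $|\EE\langle\frac{d{\cal L}_{ll'}}{d\lambda_{ll'}}\rangle|\le C(S,K)/s_n$, hence $\partial^2_{\lambda_{ll'}}f_n\le C(S,K)/s_n$. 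Fixing a shift $u>0$ small enough that the segment through $\lambda$ in the $\lambda_{ll'}$-direction of half-length $u$ stays inside ${\cal D}_{n,K}$ (so $u={\cal O}(s_n)$), I would express $\partial_{\lambda_{ll'}}F_n(\lambda)$ and $\partial_{\lambda_{ll'}}f_n(\lambda)$ through their forward and backward difference quotients plus second-order remainders $\int_0^u(u-\rho)\,\partial^2_{\lambda_{ll'}}(\,\cdot\,)(\lambda\pm\rho\,e_{ll'})\,d\rho$, combine the two one-sided versions so that the wrong-sign pieces cancel, and reach a componentwise estimate of the schematic form
\begin{align*}
\big|\langle{\cal L}_{ll'}\rangle-\EE\langle{\cal L}_{ll'}\rangle\big|\le \frac{C}{u}\Big(|F_n-f_n|(\lambda)+|F_n-f_n|(\lambda+u\,e_{ll'})+|F_n-f_n|(\lambda-u\,e_{ll'})\Big)+\frac{C(S,K)\,u}{s_n}+{\cal R}(\lambda,u)\,,
\end{align*}
where ${\cal R}(\lambda,u)$ gathers the remaining remainders, built from $n\langle({\cal L}_{ll'}-\langle{\cal L}_{ll'}\rangle)^2\rangle$ integrated over the shift segment and therefore tamed, after averaging over $\lambda\in{\cal D}_{n,K}$, by the thermal bound \eqref{thermal_L}.

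\emph{Integration and optimisation.} Squaring this inequality, taking $\EE$ and then $\EE_\lambda$: the first group of terms contributes ${\cal O}(C_f/(u^2n))$ by \eqref{hyp:f_conc} (the values at $\lambda\pm u\,e_{ll'}$ are absorbed by slightly enlarging the box, or by peeling off the ${\cal O}(u)$-thin boundary strips of ${\cal D}_{n,K}$ where the crude a priori bound $\EE\langle{\cal L}_{ll'}^2\rangle\le C(S,K)/s_n$ suffices); the convexity-defect term is a power of $u/s_n$ (its telescoping over ${\cal D}_{n,K}$ leaving only boundary values of $f_n$ that are bounded through $|\partial_{\lambda_{ll'}}f_n|\le S^2$); and $\EE_\lambda\EE[{\cal R}^2]$ is handled by \eqref{thermal_L}. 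Summing the $K^2$ pairs then yields a bound of the form $C(C_f,K,S)/(u^2n)+C(K,S)\,g(u,s_n)+C(K,S)/(s_nn)$, with $g$ increasing in $u$; balancing the first two terms produces the claimed rate $(s_n^4n)^{-1/3}$, the precise power of $s_n$ appearing in $g$ — and with it the hypothesis $s_n^4n\to+\infty$, which is exactly what makes the optimal $u$ admissible ($u\lesssim s_n$) and the bound vanish — being dictated by the ${\cal O}(s_n^{-3/2})$ size of $\partial^2_{\lambda_{ll'}}\lambda_n^{1/2}$.

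\emph{Main obstacle.} The crux, and the reason the scalar-overlap arguments do not transfer verbatim, is that here $F_n$ is genuinely \emph{not} concave (nor convexifiable by a deterministic parabola) in $\lambda_{ll'}$: the square root $\lambda_n^{1/2}$ makes $\partial^2_{\lambda_{ll'}}F_n$ random, of indefinite sign and of size ${\cal O}(s_n^{-3/2})$, and only its \emph{expectation} is controlled — by the Gaussian integration by parts of \eqref{bound_derL}. Threading this control through an argument whose skeleton is pointwise in $\lambda_{ll'}$, while keeping the normalised $\lambda$-integral over the shrinking box ${\cal D}_{n,K}$ finite and dealing with the non-symmetry of ${\cal L}$ (and of $Q$), is where the difficulty concentrates, and it is what degrades the exponent from the scalar case to $(s_n^4n)^{-1/3}$.
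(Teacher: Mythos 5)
Your skeleton---comparing $\partial_{\lambda_{ll'}}F_n=\langle{\cal L}_{ll'}\rangle$ with $\partial_{\lambda_{ll'}}f_n=\EE\langle{\cal L}_{ll'}\rangle$ through values of $F_n-f_n$ at points shifted by $u$ in the $\lambda_{ll'}$ direction, invoking \eqref{hyp:f_conc}, and optimising the shift---is indeed the paper's, and you correctly locate the obstacle in the random, sign-indefinite, ${\cal O}(s_n^{-3/2})$ piece of $\partial^2_{\lambda_{ll'}}F_n$ in \eqref{second-derivative}. But your resolution of that obstacle has a genuine gap, in two places. First, in your schematic componentwise inequality the contribution of this piece is replaced by the deterministic quantity $C(S,K)\,u/s_n$, justified by appeal to \eqref{bound_derL}; that bound controls only the \emph{expectation} $\EE\langle d{\cal L}_{ll'}/d\lambda_{ll'}\rangle$ (it comes from Gaussian integration by parts), whereas pathwise $\langle d{\cal L}_{ll'}/d\lambda_{ll'}\rangle=-\frac1n\sum_i\langle x_i\rangle^\intercal\frac{d^2\lambda^{1/2}}{d\lambda_{ll'}^2}Z_i$ is of size $s_n^{-3/2}$ times an order-one random variable with non-vanishing mean, so it cannot be absorbed into a $u/s_n$ term in an inequality that must hold realization by realization and is then squared. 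Second, your plan to handle $\EE_\lambda\EE[{\cal R}^2]$ with \eqref{thermal_L} does not work: \eqref{thermal_L} is a first-moment bound on the thermal variance, while after squaring your remainder you need the second moment of $n\langle({\cal L}_{ll'}-\langle{\cal L}_{ll'}\rangle)^2\rangle$ along the shift segment, which is not available (crude pathwise bounds of order $s_n^{-1}$ make $n^2u^2$ times it diverge for any admissible $u$). Relatedly, the balancing you invoke does not follow from your own bound: a defect $u/s_n$ squared against $C_f/(u^2n)$ would give a rate $(s_n^2n)^{-1/2}$, not the claimed $(s_n^4n)^{-1/3}$, so the final rate is quoted rather than derived.

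The missing idea is how the paper restores concavity \emph{pathwise}: although $F_n$ cannot be concavified by a deterministic parabola, it can by a \emph{random} one. Subtracting $\frac{\lambda_{ll'}^2}{2}CKSs_n^{-3/2}\frac1n\sum_{i,k'}|Z_{ik'}|$ from $F_n$ (and its average from $f_n$), as in \eqref{new-free}--\eqref{new-free-2}, dominates the bad term of \eqref{second-derivative} for every realization because $|\langle x_{ik}\rangle|\le S$ and the entries of $d^2\lambda^{1/2}/d\lambda_{ll'}^2$ are ${\cal O}(s_n^{-3/2})$; both modified functions are then concave and Lemma~\ref{lemmaConvexity} applies with no uncontrolled remainder. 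The large coefficient $s_n^{-3/2}$ only ever multiplies either $\lambda_{ll'}^2={\cal O}(s_n^2)$ or the centred variable $A_n$ with $\EE[A_n^2]\le K/n$ (see \eqref{fdiff}--\eqref{derdiff}), the moduli $C^\pm_\delta$ of the concavified averaged free energy are controlled by integrating over $\lambda_{ll'}$ together with the crude derivative bound \eqref{boudfprime}, and no squaring of thermal variances is ever needed (indeed \eqref{thermal_L} is not used in this proposition; it only enters when assembling \eqref{total_L} through \eqref{L_decomp}). Balancing $C_f/(n\delta^2)$ against $\delta/s_n^2$ then gives $\delta_n=s_n^{2/3}n^{-1/3}$ and the stated rate $(s_n^4n)^{-1/3}$.
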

\begin{proof}
Let us define the non-averaged free energy: $$F_{n}(\lambda)= F_{n}(\widetilde Y,Y,\theta,\lambda)\equiv-\frac{1}{n} \ln {\cal Z}_n(\widetilde Y,Y,\theta,\lambda)\,,$$ so that the averaged one given by \eqref{av_f} is simply $f_n(\lambda)= \EE\,F_{n}(\lambda)$ with $\EE=\EE_\theta\EE_{X|\theta_0}\EE_{\widetilde Y|X,\theta_{\rm out}}\EE_{Y|X,\lambda}$. We have the following identities: for any given realization of the quenched variables and any fixed $(l,l')$,
\begin{align}
 \frac{dF_{n}(\lambda)}{d\lambda_{ll'}}  &= \langle \mathcal{L}_{ll'} \rangle \,,\nonumber\\
 \frac{1}{n}\frac{d^2F_{n}(\lambda)}{d\lambda_{ll'}^2}  &= -\big\langle (\mathcal{L}_{ll'}  - \langle \mathcal{L}_{ll'} \rangle)^2\big\rangle-
 \frac{1}{n^2}\sum_{i=1}^n  \langle x_i\rangle^\intercal \frac{d^2\lambda^{1/2}}{d\lambda_{ll'}^2} Z_i \,.\label{second-derivative}
\end{align}
The same identities for $f_n(\lambda)=\EE \,F_{n}(\lambda)$ are true but with an additional average $\EE$ over the quenched variables (recall, e.g., \eqref{80}). The thermal fluctuations of $\cal L$ are almost directly equal to the the second derivative of $-\frac1n F_n(\lambda)$ as seen from \eqref{second-derivative}, up to a lower order term that we consider now. We have 
\begin{align}
\langle x_i\rangle^\intercal \frac{d^2\lambda^{1/2}}{d\lambda_{ll'}^2} Z_i=\sum_{kk'}	\langle x_{ik}\rangle \Big(\frac{d^2\lambda^{1/2}}{d\lambda_{ll'}^2}\Big)_{kk'} Z_{ik'}\le S D_n K\sum_{k'}|Z_{ik'}|\label{51}
\end{align}
where $$D_n\equiv \max_{\{(kk'),(ll')\}}\Big|\Big(\frac{d^2\lambda^{1/2}}{d\lambda_{ll'}^2}\Big)_{kk'}\Big|=C s_n^{-3/2}$$for some $C>0$. Indeed by \eqref{38} the entries of $\frac{d\lambda^{1/2}}{d\lambda_{ll'}}$ are ${\cal O}(s_n^{-1/2})$, and thus $D_n=Cs_n^{-3/2}$ by the same manipulations as for getting \eqref{38}. Therefore the following function of $\lambda_{ll'}$
\begin{align}\label{new-free}
 &F_{n,ll'}(\lambda_{ll'}) \equiv F_{n}(\lambda) - \frac{\lambda_{ll'}^2}{2} CKSs_n^{-{3/2}}\frac1n\sum_{i,k'=1}^{n,K}|Z_{ik'}|
\end{align}
is concave by construction, because thanks to 
\eqref{second-derivative}, \eqref{51} we see that the second $\lambda_{ll'}$-derivative of $F_{n,ll'}(\lambda_{ll'})$ is negative (for an appropriate positive constant $C$). Obviously its average
\begin{align}\label{new-free-2}
 f_{n,ll'}(\lambda_{ll'}) \equiv \mathbb{E} \,F_{n,ll'}(\lambda_{ll'})= f_{n}(\lambda)  -\frac{\lambda_{ll'}^2}{2} CKSs_n^{-{3/2}}\frac1n\sum_{i,k'=1}^{n,K}\EE\,|Z_{ik'}|
\end{align}
is concave too.
Concavity then allows to use the following standard lemma (see \cite{BarbierM17a,barbier2017phase} for a proof):
\begin{lemma}[A bound for concave functions]\label{lemmaConvexity}
Let $G(x)$ and $g(x)$ be concave functions. Let $\delta>0$ and define $C^{-}_\delta(x) \equiv g'(x-\delta) - g'(x) \geq 0$ and $C^{+}_\delta(x) \equiv g'(x) - g'(x+\delta) \geq 0$. Then
\begin{align*}
|G'(x) - g'(x)| \leq \delta^{-1} \sum_{u \in \{x-\delta,\, x,\, x+\delta\}} |G(u)-g(u)| + C^{+}_\delta(x) + C^{-}_\delta(x)\,.
\end{align*}
\end{lemma}
First, from \eqref{new-free}, \eqref{new-free-2} we have 
\begin{align}\label{fdiff}
 F_{n,ll'}(\lambda_{ll'}) - f_{n,ll'}(\lambda_{ll'}) = F_{n}(\lambda) - f_{n}(\lambda) - \frac{\lambda_{ll'}^2}{2} CKSs_n^{-{3/2}} A_n  
\end{align} 
with
\begin{align*}
 A_n \equiv \frac{1}{n}\sum_{i,k'=1}^{n,K} (\vert Z_{ik'}\vert -\mathbb{E}\,\vert  Z_{ik'}\vert)\,.	
\end{align*}
Second, we obtain for the $\lambda_{ll'}$-derivatives difference
\begin{align}\label{derdiff}
 F_{n,ll'}'(\lambda_{ll'}) - f_{n,ll'}'(\lambda_{ll'}) = 
\langle \mathcal{L}_{ll'} \rangle-\mathbb{E}\langle \mathcal{L}_{ll'} \rangle -\lambda_{ll'}CKSs_n^{-3/2}A_n \,.
\end{align}
From \eqref{fdiff} and \eqref{derdiff} it is then easy to show that Lemma \ref{lemmaConvexity} applied to $G(x) \to F_{n,ll'}(\lambda_{ll'})$ and $g(x)\to f_{n,ll'}(\lambda_{ll'})$ with $x\to \lambda_{ll'}$ gives
\begin{align}\label{usable-inequ}
\vert \langle \mathcal{L}_{ll'}\rangle - \mathbb{E}\langle \mathcal{L}_{ll'}\rangle\vert&\leq 
\delta^{-1} \sum_{u\in \{\lambda_{ll'} -\delta,\, \lambda_{ll'},\, \lambda_{ll'}+\delta\}}
 \Big(\vert F_{n}(u) - f_{n}(u) \vert + \frac{u^2}{2}CKSs_n^{-3/2}\vert A_n \vert \Big)\nonumber\\
 &\qquad\qquad\qquad\qquad\qquad
  + C_\delta^+(\lambda_{ll'}) + C_\delta^-(\lambda_{ll'}) + CKS(2K+1)s_n^{-1/2}\vert A_n\vert
\end{align}
where we used $|\lambda_{ll'}|\le (2K+1)s_n$ for any $(l,l')\in\{1,\ldots,K\}^2$ as $\lambda\in {\cal D}_{n,K}$, and where $$C_\delta^-(\lambda_{ll'})\equiv f_{n,ll'}'(\lambda_{ll'}-\delta)-f_{n,ll'}'(\lambda_{ll'})\ge 0\,, \qquad C_\delta^+(\lambda_{ll'})\equiv f_{n,ll'}'(\lambda_{ll'})-f_{n,ll'}'(\lambda_{ll'}+\delta)\ge 0\,.$$ Note that $\delta$ will be chosen small enough so that when $\lambda_{ll'}$ is varied by $\pm \delta$ the matrix $\lambda$ remains in the set ${\cal D}_{n,K}$. Remark that by independence of the noise variables $$\mathbb{E}[A_n^2]  \le (1-2/\pi)\frac{K}{n}\le \frac{K}n\,.$$ We square the identity \eqref{usable-inequ} and take its expectation. Then using $(\sum_{i=1}^pv_i)^2 \le p\sum_{i=1}^pv_i^2$ by convexity, and again that $|\lambda_{ll'}|\le (2K+1)s_n$ as well as the free energy concentration hypothesis~\eqref{hyp:f_conc} (irrelevant positive numerical constants are absorbed in the generic positive constants $C$),
\begin{align}\label{intermediate}
 \mathbb{E}\big[(\langle \mathcal{L}_{ll'}\rangle - \mathbb{E}\langle \mathcal{L}_{ll'}\rangle)^2\big]
 &
 \leq \, 
  \frac{C_f}{n\delta^2} + ((2K+1)s_n+\delta)^4 \frac{CK^3S^2}{n\delta^2s_n^3}   + C_\delta^+(\lambda_{ll'})^2 + C_\delta^-(\lambda_{ll'})^2
 + \frac{CK^4S^2}{s_nn} \, .
\end{align}
Recall $|C_\delta^\pm(\lambda_{ll'})|=|f'_{n,ll'}(\lambda_{ll'}\pm\delta)-f'_{n,ll'}(\lambda_{ll'})|$. We have (here $Z\sim{\cal N}(0,1)$)
\begin{align}
|f'_{n,ll'}(\lambda_{ll'})| = \big|\EE\langle {\cal L}_{ll'}\rangle -\lambda_{ll'}CK^2Ss_n^{-3/2}\EE\,|Z|\big|  \leq S^2+CK^2Ss_n^{-1/2}\le C(S)K^2s_n^{-1/2}\label{boudfprime}	
\end{align}
using \eqref{80}, \eqref{80_ll} to assess that $|\EE\langle {\cal L}_{ll'}\rangle|\le S^2$, and that $\lambda_{ll'}$ is propositional to $s_n$ because $\lambda \in{\cal D}_{n,K}$. Thus we have the crude bound $$|C_\delta^\pm(\lambda_{ll'})|\le C(S)K^2s_n^{-1/2}\,.$$ Consider again the integration domain \eqref{integrationdomain}. We reach
\begin{align*}
 \int_{a_n}^{b_n} d\lambda_{ll'}\,&\big\{C_\delta^+(\lambda_{ll'})^2 + C_\delta^-(\lambda_{ll'})^2\big\}\leq  C(S)K^2s_n^{-1/2}
 \int_{{a_n}}^{{b_n}} d\lambda_{ll'}\, \big\{C_\delta^+(\lambda_{ll'}) + C_\delta^-(\lambda_{ll'})\big\}
 \nonumber \\ 
&= 
C(S)K^2s_n^{-1/2}\big[\big(f_{n,ll'}(a_n+\delta) - f_{n,ll'}(a_n-\delta)\big)+ \big(f_{n,ll'}(b_n-\delta) - f_{n,ll'}(b_n+\delta)\big)\big]\,.
\end{align*}
The mean value theorem and \eqref{boudfprime} 
imply $$|f_{n,ll'}(\lambda_{ll'}-\delta) - f_{n,ll'}(\lambda_{ll'}+\delta)|\le \frac{C(S)K^2\delta}{\sqrt{s_n}} \,.$$ Therefore
\begin{align*}
 \int_{a_n}^{b_n} d\lambda_{ll'}\, \big\{C_\delta^+(\lambda_{ll'})^2 + C_\delta^-(\lambda_{ll'})^2\big\}\leq  C(S)K^4\frac{\delta}{s_n}\,.
\end{align*}
Averaging \eqref{intermediate} over $\lambda_{ll'}\in (a_n, b_n)$ and choosing $\delta=\delta_n$ s.t. $\delta_n/s_n\to 0_+$ yields
\begin{align*}
 &\frac{1}{s_n}\int_{a_n}^{b_n} d\lambda_{ll'}\, 
 \mathbb{E}\big[(\langle \mathcal{L}_{ll'}\rangle - \mathbb{E}\langle \mathcal{L}_{ll'}\rangle)^2\big]\leq \frac{C_f}{n\delta_n^2}+CK^7S^2\frac{s_n}{n\delta_n^2} +C(S)K^4\frac{\delta_n}{s_n^2}+ \frac{CK^4S^2}{s_nn}\,.
\end{align*}
We optimize the bound by choosing $\delta_n  = s_n^{2/3}n^{-1/3}$ (which indeed verifies $\delta_n/s_n\to 0$ as long as $s_nn\to +\infty$). The desired result is then obtaind after averaging over the remaining $K(K+1)/2-1$ independent entries of $\lambda\in{\cal D}_{n,K}$, and summing the $K^2$ fluctuations for the various $(l,l')$ couples.	
\end{proof}

\section*{Acknowledgments}
Funding from the ``Chaire de recherche sur les mod\`eles et sciences des donn\'ees'', Fondation CFM pour la Recherche-ENS is acknowledged. I would like to deeply thank Nicolas Macris, Antoine Maillard and Florent Krzakala for stimulating exchanges. In particular Nicolas Macris for the very many discussions leading to the idea of the perturbation described in section~\ref{sec:Gauss_channel}, his careful reading of the manuscript, but most of all for his constant support. I am also very grateful to L\'eo Miolane for pointing identity \eqref{Leo_0}, and to Cl\'ement Luneau for his careful reading and corrections of the manuscript. Last but not least, I want to express all my gratitude to Dmitry Panchenko for helping me facing conceptual issues on related questions, and his detailed and patient explanations on some of his works.

\bibliographystyle{./unsrt_abbvr.bst}
{ \bibliography{refs_2}}

\end{document}